\newtheorem{theorem}{Theorem}
\newtheorem{corollary}{Corollary}
\newtheorem{lemma}{Lemma}
\newtheorem{proposition}{Proposition}
\newenvironment{proof}[1][Proof]{\noindent\textbf{#1.} }{\ \rule{0.5em}{0.5em}}
\begin{document}

\title{Relay Selection for Simultaneous Information Transmission and
Wireless Energy Transfer: \\
A Tradeoff Perspective}
\author{Diomidis S. Michalopoulos,~\IEEEmembership{Member,~IEEE,} Himal A.
Suraweera,~\IEEEmembership{Member,~IEEE}, and Robert Schober,~%
\IEEEmembership{Fellow,~IEEE} \thanks{%
D. S. Michalopoulos and R. Schober are with the Department of
Electrical and Computer Engineering, University of
Erlangen-Nuremberg, Germany, (e-mail:
\{michalopoulos, schober\}@lnt.de)} \thanks{%
H. A. Suraweera is with the Singapore University of Technology and
Design, 20 Dover Drive, Singapore 138682 (e-mail:
himalsuraweera@sutd.edu.sg).}
%\thanks{%
%R. Schober is with the Department of Electrical and Computer
%Engineering, University of Erlangen-Nuremberg, Germany, (e-mail:
%schober@lnt.de).}
 \vspace{-0.5cm}} \maketitle

\vspace{-0.5cm}
\begin{abstract}
In certain applications, relay terminals can be employed to simultaneously
deliver information and energy to a designated receiver and a radio
frequency (RF) energy harvester, respectively. In such scenarios, the relay
that is preferable for information transmission does not necessarily
coincide with the relay with the strongest channel to the energy harvester,
since the corresponding channels fade independently. Relay selection thus
entails a tradeoff between the efficiency of the information transfer to the
receiver and the amount of energy transferred to the energy harvester. The
study of this tradeoff is the subject on which this work mainly focuses.
Specifically, we investigate the behavior of the ergodic capacity and the
outage probability of the information transmission to the receiver, for a
given amount of energy transferred to the RF energy harvester. We propose
two relay selection methods that apply to any number of available relays.
Furthermore, for the case of two relays, we develop the optimal relay
selection method in a maximum capacity / minimum outage probability sense,
for a given energy transfer constraint. A close-to-optimal selection method
that is easier to analyze than the optimal one is also examined. Closed-form
expressions for the capacity-energy and the outage-energy tradeoffs of the
developed schemes are provided and corroborated by simulations. Interesting
insights on the aforementioned tradeoffs are obtained.
%of the out of the pool of
\end{abstract}

\begin{IEEEkeywords}
Capacity-energy tradeoff, energy harvesting, ergodic capacity,
relay selection, outage probability, wireless energy transfer.
\end{IEEEkeywords}

\newpage

\section{Introduction}

The approach towards energy consumption in communication systems has
experienced a drastic change in the last few years. The enormous growth of
telecommunication networks has lead to a massive increase of their energy
consumption. Forecasts on the energy consumption of future applications
place information and communication technology (ICT) networks among the big
energy consumers, so that the energy consumed by ICT infrastructure
worldwide is anticipated to reach the current level of the total global
electricity consumption in the next 20-25 years \cite{C:FettweisICT,ST:Cisco}%
. Hence, owing to this growing concern regarding the energy footprint of
communications, modern architectures consider energy not as an unlimited
resource, as it traditionally was, but as a scarce resource which plays a
significant role in system design \cite{B:Hoss_Green,J:Zia}.

In line with the contemporary trend towards renewable sources, energy
harvesting appears as a viable solution to powering wireless communications
nodes \cite{J:Medepally,J:Ho1,J:Ozel}. In addition, energy harvesting offers
wireless communication substantial flexibility, since wireless nodes are not
necessarily attached to a fixed power supply nor are they dependent on
battery replacement and/or recharge \cite{J:Ho1}. The most common forms of
harvested energy used in wireless communications are solar energy,
piezoelectric energy, and energy harvested from radio frequency (RF)
transmissions \cite{J:Sudeva,J:Nasir,J:HuangLau}. The latter form attracts
particular interest as it allows terminals with low energy requirements to
be remotely powered, thereby it provides a feasible solution for cases where
remote energy supply is the only powering option (for example, in body area
networks where devices are implanted into the human body such that accessing
them is impossible) \cite{ST:Pcast}. Moreover, as information and energy are
transmitted via the same signal, RF energy harvesting poses challenges on
the efficient design of systems that provide simultaneous information and
energy transfer to the same terminal \cite%
{J:KrikidisRF,C:Varshney,C:ShanTes,C:Shen} or to different terminals \cite%
{J:ZhangMIMO,C:Chalise,C:IshibashiTarokh}.

\emph{Motivation: }Relay-assisted communication and particularly relay
selection offers a substantial improvement to the quality of service in
wireless networks, particularly in scenarios where source and destination
are located far apart from one another \cite{B:Mischa_Coop,B:Fitzek_Coop}.
Naturally, the same concept applies also to wireless energy transfer
scenarios, as the large path-loss of the energy-bearing channel renders
wireless energy transfer over large distances prohibitive. In this regard,
suppose that relay terminals are used both for assisting the information
transmission to a designated receiver and the energy transfer to a
designated RF energy harvester, which are located far from each other. Then,
the question that arises is which relay to activate in each transmission
session, as the activated relay will provide the receiver and the harvester
with data and energy respectively and the channels used for information and
energy transfer vary independently from each other. Since the relay that
provides the most efficient data transmission to the receiver does not
necessarily coincide with the relay that provides the largest energy
transfer to the RF harvester, a tradeoff is revealed: The quality of the
information transmission to the receiver is exchanged for the efficiency of
the energy transfer to the harvester. This tradeoff is reflected in the
decision on which relay is selected, and represents the main topic of
interest of this work.

\emph{Contribution:} We show via mathematical and numerical analysis that,
depending on the application scenario and the available amount of channel
state information (CSI), the achievable tradeoff can range from a linear
exchange between data and energy transmission to the optimal feasible
tradeoff. In particular, our results can be summarized as follows. For the
versatile scenario where $N$ relays are available for information forwarding
and wireless energy transfer, we study two relay selection schemes, namely
the time-sharing and the threshold-checking scheme, in terms of the
achievable tradeoff between average energy transfer and ergodic capacity, as
well as the tradeoff between energy transfer and outage probability. For the
case where two relays are available ($N=2$), we propose a selection method
that attains the optimal achievable tradeoff (i.e., the optimal ergodic
capacity and/or outage probability for any given energy transfer), along
with a similar selection method which behaves approximately as the optimal
one in certain regions. Nevertheless, as both of these selection methods
require global CSI knowledge in each transmission session, the time-sharing
and the threshold-checking schemes are of interest in scenarios with limited
CSI availability.

\emph{Organization: }Useful insights regarding the tradeoff between the
information transmission to the receiver and the energy transfer to the RF
harvester are provided in Section \ref{NUM}, where an extensive discussion
on the derived results is given. The tradeoff results pertain to the optimal
schemes which are developed in Section \ref{PAR} for $N=2$, as well as the
versatile schemes of time-sharing and threshold-checking which apply to any
number of relays. These schemes are presented in detail in Section \ref{PRE}%
, and later analyzed in terms of the achievable tradeoff between
ergodic capacity and energy transfer (Section \ref{TRAD}) and the
outage performance for a given energy transfer constraint (Section
\ref{OUT}).
%In the latter two sections, closed-form mathematical
%expressions are derived for the ergodic capacity and the outage
%probability, respectively, as a function of the transferred
%energy.
Prior to the analysis, the preliminaries of the considered
system model and some fundamental tradeoff features are presented
in the ensuing, Section \ref{PRE}.

\section{\label{PRE}Preliminaries}

\subsection{System Model}

Sketched in Fig. \ref{Descr}, the considered setup simultaneously transfers
information from a source terminal, $\mathcal{S}$, to a destination
terminal, $\mathcal{D}$, and energy from $\mathcal{S}$ to a harvester
terminal, $\mathcal{H}$. Both the information and energy transferring
processes are assisted by a set of half-duplex decode and forward (DF)
relays, denoted by $\mathcal{R}_{i}$, $i=1,...,N$. All terminals are assumed
to be equipped with a single antenna. The information transmission to $%
\mathcal{D}$ and the energy transfer to $\mathcal{H}$ take place via one of
the relays, based on a process described in Subsection \ref{RelSel}.

Let $h_{AB}$ denote the channel between terminals $A$ and $B$, where $%
A\in\left\{ \mathcal{S},\mathcal{R}_{1},...,\mathcal{R}_{N}\right\} $ and $%
B\in\left\{ \mathcal{R}_{1},...,\mathcal{R}_{N},\mathcal{D},\mathcal{H}%
\right\} $. Let us denote the squared channel gain of the $A$-$B$ link by $%
a_{AB}=\left\vert h_{AB}\right\vert ^{2}$. For simplicity, we assume that
the fading in all channels involved is Rayleigh, independent and identically
distributed (i.i.d.). However, the analysis can be extended to account for
independent but not necessarily identically distributed fading channels as
well. The instantaneous signal-to-noise ratio (SNR) of the $A$-$B$ link is
denoted by $\gamma_{AB}$, and is exponentially distributed with mean value $%
\bar{\gamma}$. Moreover, the source and the relay terminals are assumed to
transmit with power $P$.

\begin{figure}[ptb]
\centering
\includegraphics[trim=1.2cm 17.1cm 4.5cm 2.8cm, clip=true,
scale=0.55]{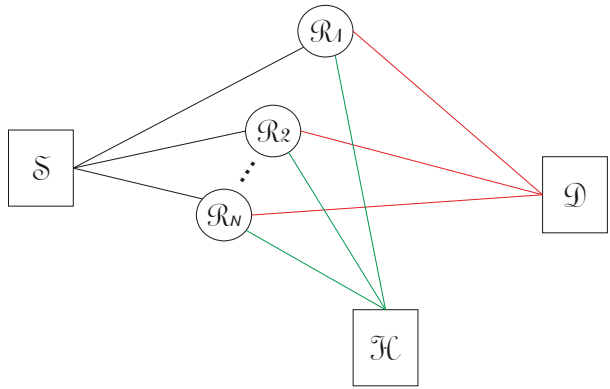}
\caption{The considered setup. Red and black lines indicate information
transfer; green lines indicate energy transfer.}
\label{Descr}
\end{figure}

In DF relaying, the composite $\mathcal{S}$-$\mathcal{R}_{i}$-$\mathcal{D}$
path is dominated by the \textquotedblleft bottleneck\textquotedblright~link
(see e.g. \cite{J:Behrouz1}). Hence, the equivalent SNR, $\gamma_{i}$, of
the $\mathcal{S}$-$\mathcal{R}_{i}$-$\mathcal{D}$ link, is defined as%
\begin{equation}
\gamma_{i}=\min\left( \gamma_{\mathcal{SR}_{i}},\gamma_{\mathcal{R}_{i}%
\mathcal{D}}\right) .  \label{min}
\end{equation}
Because $\gamma_{\mathcal{SR}_{i}}$ and $\gamma_{\mathcal{R}_{i}\mathcal{D}}$
are independent exponentially distributed random variables (RVs), $%
\gamma_{i} $ is also exponentially distributed and its mean value equals
half of the mean value of $\gamma_{\mathcal{SR}_{i}}$ and $\gamma_{\mathcal{R%
}_{i}\mathcal{D}}$, i.e.%
\begin{equation}
f_{\gamma_{i}}\left( x\right) =\frac{2}{\bar{\gamma}}\exp\left( -\frac {2x}{%
\bar{\gamma}}\right) .  \label{gi}
\end{equation}
We denote by $\varepsilon_{i}$ the energy transferred to $\mathcal{H}$ via
the $\mathcal{S}$-$\mathcal{R}_{i}$-$\mathcal{H}$ path, i.e., the harvested
energy when $\mathcal{R}_{i}$ is selected. This energy is given by%
\begin{equation}
\varepsilon_{i}=\beta~P~a_{\mathcal{R}_{i}\mathcal{H}}
\end{equation}
where $\beta$, $0<\beta\leq1$, denotes the energy absorption
coefficient, which equals the energy absorbed by $\mathcal{H}$
when the received power at $\mathcal{H}$ equals one. Roughly
speaking, parameter $\beta$ characterizes the efficiency of the
energy harvester \cite{J:ZhangMIMO}. The noise power is assumed
identical in all links, and denoted by $N_{0}$. Moreover, we
assume that because of large path-loss and/or shadowing no
information and energy
are transferred via the $\mathcal{S}$-$\mathcal{D}$ and $\mathcal{S}$-$%
\mathcal{H}$ channels, respectively.

\subsection{\label{RelSel}Relay Selection: General Description}

In each transmission frame, a single relay out of the set of available
relays is selected. The selected relay is denoted by $\mathcal{R}_{s}$: That
is, $s=i$ if $\mathcal{R}_{i}$ is selected, $i=1,...,N$. The selection is
assumed to be implemented in a centralized manner. That is, a central unit
(CU) collects the CSI of all the links in the system. Based on the collected
CSI, the CU decides which relay should be selected for a given transmission
frame. Loosely speaking, the decision on the selected relay tries to
compromise between the reliability of the information transmission to $%
\mathcal{D}$ and the total energy transferred to $\mathcal{H}$.

Let $\mathcal{R}_{\kappa}$ denote the relay which maximizes the SNR at $%
\mathcal{D}$ at a given transmission frame. That is,%
\begin{equation}
\kappa=\arg\max_{i=1,...,N}\gamma_{i}.
\end{equation}
Let $\mathcal{R}_{\lambda}$ denote the relay which maximizes the energy
transfer to $\mathcal{H}$ at a given transmission frame. That is,%
\begin{equation}
\lambda=\arg\max_{i=1,...,N}\varepsilon_{i}.  \label{lamda}
\end{equation}
Apparently, as $\mathcal{R}_{\kappa}$ and $\mathcal{R}_{\lambda}$ are not
necessarily identical, the selection of $\mathcal{R}_{s}$ leads to a
tradeoff between information transmission and energy transfer, which is
analyzed in detail in Section \ref{TRAD}. Prior to elaborating on the
particular tradeoff of interest, some preliminaries on the tradeoff analysis
are in order.

\subsection{Preliminaries of Tradeoff Analysis}

In economics and several fields of engineering, a tradeoff is referred to as
a situation where one commodity or performance metric, $X$, is sacrificed in
return for gaining another commodity or performance metric, $Y$ \cite%
{B:Starr}. The tradeoff is usually illustrated by a 2-dimensional curve
which consists of the set of all feasible $\left( X,Y\right) $ pairs. An
illustrative example of tradeoff curves is presented in Fig. \ref{TR}, where
the range of the exchanged metrics is normalized to one.

For a better understanding of the subsequent analysis, the following
terminology is introduced:

\begin{itemize}
\item \emph{Tradeoff factor:} By tradeoff factor, $\delta$ ($0\leq\delta\leq
1$), we refer to the priority of maximizing $X$ over $Y$. The tradeoff
factor specifies the point of operation along the tradeoff curve, i.e., it
specifies the quantity of $X$ that is exchanged with $Y$. When the range of $%
X$ is normalized to the interval between zero and one (c.f. Fig. \ref{TR}),
the tradeoff factor equals the abscissa of the point of interest.

\begin{figure}[ptb]
\centering
\includegraphics[trim=1.4cm 14.9cm 8.5cm 6.8cm, clip=true,
scale=1.1]{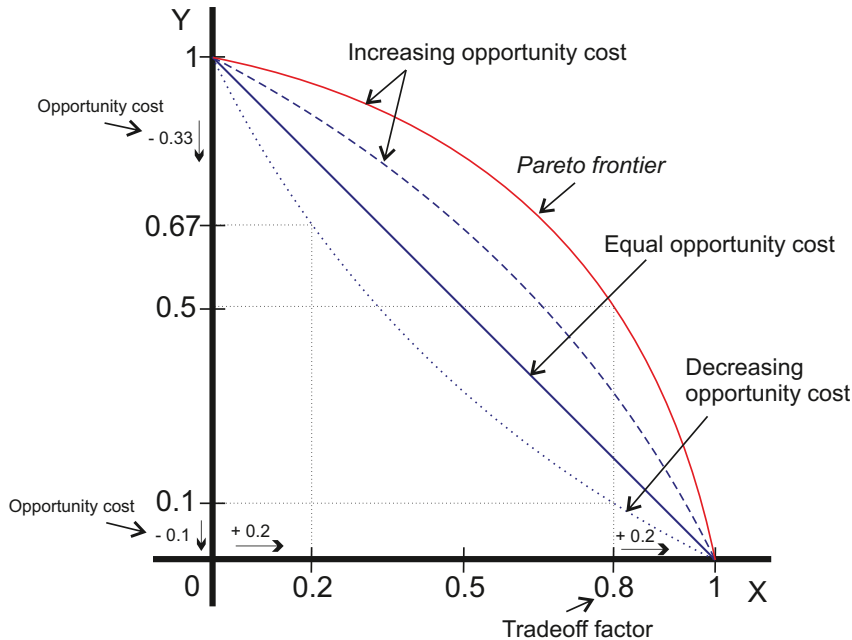}
\caption{General tradeoff parameters between two variables, $X$, $Y$.}
\label{TR}
\end{figure}

\item \emph{Opportunity cost: }For a given difference of $X$, the
opportunity cost is defined as the corresponding absolute difference of $Y$
\cite{B:Starr}. A linear tradeoff curve thus corresponds to an
\textquotedblleft\textit{equal opportunity cost}\textquotedblright\ tradeoff
(c.f. Fig. \ref{TR}, solid blue line), since it entails a constant exchange
between $X$ and $Y$. A strictly convex tradeoff curve is associated with a
tradeoff with \textquotedblleft\textit{decreasing opportunity cost}%
\textquotedblright\ (c.f. Fig. \ref{TR}, dotted blue line) while a strictly
concave curve corresponds to a tradeoff with \textquotedblleft\textit{%
increasing opportunity cost}\textquotedblright\ (c.f. Fig. \ref{TR}, dashed
blue line). In practice, tradeoffs with increasing opportunity cost are
preferable because they entail a relatively larger gain of one of the two
metrics for a given sacrifice of the other.

\item \emph{Pareto frontier: }An $\left( X,Y\right) $ allocation is
considered \textquotedblleft Pareto efficient\textquotedblright\ if there
exist no other feasible $\left( X,Y\right) $ allocation which results in
increasing one metric ($X$ or $Y$) without decreasing the other \cite%
{B:Starr}. The set of all Pareto efficient points comprises the Pareto
frontier (c.f. Fig. \ref{TR}, red line). The Pareto frontier illustrates the
optimal tradeoff between $X$ and $Y$, in the sense that it provides the
largest achievable value of $Y$ ($X$) given $X$ ($Y$).
\end{itemize}

\section{\label{TRAD}Tradeoff Between Ergodic Capacity and Average
Transferred Energy}

This section presents an analysis of the tradeoff between the ergodic
capacity for information transmission to $\mathcal{D}$ and the average
energy transfer to $\mathcal{H}$. In the sequel, we consider the energy
transfer as the reference metric ($X$) and the ergodic capacity as the cost
metric ($Y$). This tradeoff is governed by the decision regarding the relay
selection. That is, since $\mathcal{R}_{\kappa }$ and $\mathcal{R}_{\lambda
} $ do not necessarily coincide with each other, the choice of the activated
relay determines how much of the available capacity for information
transmission to $\mathcal{D}$ is exchanged for energy transfer to $\mathcal{H%
}$, or, in other words, the tradeoff factor. Here, we consider three relay
selection schemes, which we dub \textquotedblleft
time-sharing\textquotedblright , \textquotedblleft
threshold-checking\textquotedblright , and \textquotedblleft weighted
difference\textquotedblright\ schemes, respectively. The three considered
schemes have different CSI requirements, thus depending on the CSI
availability they can be employed in different application scenarios. The
resulting tradeoffs are investigated in Subsections \ref{TS0}, \ref{TB0},
and \ref{WD0}, respectively, while a discussion on their implementation
complexity is given in Subsection \ref{IC0}. Prior to analyzing the specific
tradeoffs enabled by the considered schemes, we first study their
boundaries, namely the minimum and maximum achievable ergodic capacity and
energy transfer. These boundaries are identical for all three schemes.

\subsection{\label{BOUND}Tradeoff Boundaries}

\subsubsection{Ergodic capacity}

\begin{lemma}
\label{lem1}The minimum and maximum ergodic capacity for
information
transmission to $\mathcal{D}$ equals, respectively,%
\begin{align}
C_{\min } &=\frac{\exp \left( \frac{2}{\bar{\gamma}}\right) \mathcal{E}%
_{1}\left( \frac{2}{\bar{\gamma}}\right) }{2\ln \left( 2\right) }
\label{C1} \\
C_{\max } &=N\sum_{j=0}^{N-1}\frac{\left( -1\right) ^{j}\binom{N-1}{j}}{%
2\left( j+1\right) \ln \left( 2\right) }\exp \left( 2\frac{j+1}{\bar{\gamma}}%
\right) \mathcal{E}_{1}\left( 2\frac{j+1}{\bar{\gamma}}\right)
\label{CN}
\end{align}%
where $\mathcal{E}_{n}\left( x\right) =\int_{1}^{\infty }e^{-xy}/\left(
y^{n}\right) dy$ is the exponential integral function \cite[Eq. (5.1.1)]%
{B:Abr_Ste_Book}.
\end{lemma}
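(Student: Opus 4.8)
The plan is to identify the two endpoints of the capacity--energy tradeoff curve with two extreme relay-selection policies and then to evaluate the corresponding ergodic capacities in closed form. The minimum ergodic capacity is attained at the endpoint where the selection assigns absolute priority to energy transfer, so that $\mathcal{R}_{s}=\mathcal{R}_{\lambda}$ is selected in every frame; the maximum is attained at the opposite endpoint, where information transmission has absolute priority and $\mathcal{R}_{s}=\mathcal{R}_{\kappa}$ is always selected. Since half-duplex DF relaying consumes two orthogonal channel uses per frame, in both cases the ergodic capacity is the expectation of $\tfrac{1}{2}\log_{2}(1+\gamma_{s})$ over the distribution of the selected SNR $\gamma_{s}$, so the task reduces to (i) determining that distribution under each policy and (ii) computing a logarithmic expectation of an exponential-type random variable.

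First I would pin down the two distributions. For $C_{\min}$, the crucial observation is that the harvesting links $\mathcal{S}$--$\mathcal{R}_{i}$--$\mathcal{H}$, and hence the energies $\varepsilon_{i}$, involve a different set of physical channels than the data-path SNRs $\gamma_{i}$, so they are statistically independent of $(\gamma_{1},\dots,\gamma_{N})$. Consequently the energy-optimal index $\lambda=\arg\max_{i}\varepsilon_{i}$ is independent of $(\gamma_{1},\dots,\gamma_{N})$, which forces $\gamma_{\lambda}$ to have exactly the marginal law of a single $\gamma_{i}$, i.e., the exponential pdf \eqref{gi}. For $C_{\max}$, the selected SNR is $\gamma_{\kappa}=\max_{i}\gamma_{i}$; its cdf is $\bigl(1-e^{-2x/\bar{\gamma}}\bigr)^{N}$, and expanding $\bigl(1-e^{-2x/\bar{\gamma}}\bigr)^{N-1}$ by the binomial theorem expresses the pdf of $\gamma_{\kappa}$ as the finite exponential mixture $f_{\gamma_{\kappa}}(x)=\tfrac{2N}{\bar{\gamma}}\sum_{j=0}^{N-1}(-1)^{j}\binom{N-1}{j}e^{-2(j+1)x/\bar{\gamma}}$.

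Next I would evaluate the logarithmic expectations using the elementary identity $\int_{0}^{\infty}\ln(1+x)\,\mu e^{-\mu x}\,dx=e^{\mu}\,\mathcal{E}_{1}(\mu)$, which follows from a single integration by parts (both boundary terms vanish for $\mu>0$) together with the substitution $u=1+x$ and the definition of $\mathcal{E}_{1}$ given in the statement. Applying it with $\mu=2/\bar{\gamma}$ to the pdf \eqref{gi} and dividing by $2\ln 2$ yields \eqref{C1} at once. For \eqref{CN} I would apply the same identity term by term to the mixture $f_{\gamma_{\kappa}}$: in the $j$-th term I write $\tfrac{2}{\bar{\gamma}}=\tfrac{1}{j+1}\cdot\tfrac{2(j+1)}{\bar{\gamma}}$ so as to match the form $\mu e^{-\mu x}$ with $\mu=2(j+1)/\bar{\gamma}$, which produces the factor $\tfrac{1}{j+1}$ and the argument $2(j+1)/\bar{\gamma}$ inside $\mathcal{E}_{1}$; collecting the common constant $N/(2\ln 2)$ then gives the stated sum.

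I do not anticipate a genuine obstacle. The one step that warrants care is the argument that $\gamma_{\lambda}$ is marginally exponential --- that always picking the best relay for energy does not bias the statistics of the delivered data SNR --- which rests precisely on the independence of the $\mathcal{S}$--$\mathcal{R}_{i}$--$\mathcal{H}$ and $\mathcal{S}$--$\mathcal{R}_{i}$--$\mathcal{D}$ channels postulated in the system model. Everything else --- the order-statistic cdf, the binomial expansion, and the elementary integral --- is routine.
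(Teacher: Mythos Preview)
Your proposal is correct and follows essentially the same route as the paper: identify the two extreme selection policies, use order statistics with a binomial expansion for the pdf of $\gamma_{\kappa}$, and reduce the logarithmic expectation to $e^{\mu}\mathcal{E}_{1}(\mu)$ via integration by parts. Your justification that $\gamma_{\lambda}$ inherits the single-relay marginal (because $\varepsilon_{i}=\beta P\,a_{\mathcal{R}_{i}\mathcal{H}}$ depends only on the $\mathcal{R}_{i}$--$\mathcal{H}$ channel and is therefore independent of $\gamma_{i}$) is exactly the independence argument the paper invokes when it says the minimum capacity corresponds to a selection that ignores the $\mathcal{S}$--$\mathcal{R}_{i}$--$\mathcal{D}$ CSI and hence reduces \eqref{CN} to the case $N=1$.
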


\begin{proof}
The proof is provided in Appendix \ref{Cmax}.
\end{proof}

\subsubsection{Average Transferred Energy}

\begin{lemma}
\label{lem2}The minimum and maximum average energy transfer to
$\mathcal{H}$
equals, respectively,%
\begin{align}
\epsilon_{\min}&=\bar{\varepsilon}  \label{emin}\\
\epsilon_{\max}&=H_{N}\bar{\varepsilon}  \label{emax}
\end{align}
%\begin{equation}
%\epsilon_{\max}=H_{N}\bar{\varepsilon}  \label{emax}
%\end{equation}%
%\begin{equation}
%\epsilon_{\min}=\bar{\varepsilon}  \label{emin}
%\end{equation}
where $\bar{\varepsilon}=\beta N_{0}\bar{\gamma}$ is the expectation of $%
\varepsilon_{i}$ and $H_{N}$ denotes the harmonic number of $N$ defined as $%
H_{N}=\sum_{i=1}^{N}\left( 1/i\right) $.
\end{lemma}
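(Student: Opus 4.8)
The plan is to handle the two tradeoff endpoints separately, exploiting the fact that each harvested-energy quantity is a rescaled link SNR. Since the instantaneous SNR of the $\mathcal{R}_i$-$\mathcal{H}$ link is $\gamma_{\mathcal{R}_i\mathcal{H}}=P\,a_{\mathcal{R}_i\mathcal{H}}/N_{0}$, we have $\varepsilon_i=\beta P\,a_{\mathcal{R}_i\mathcal{H}}=\beta N_{0}\,\gamma_{\mathcal{R}_i\mathcal{H}}$, so that $\varepsilon_{1},\dots,\varepsilon_{N}$ are i.i.d.\ exponential RVs with mean $\mathbb{E}[\varepsilon_i]=\beta N_{0}\bar{\gamma}=\bar{\varepsilon}$.

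For the minimum, I would argue that, among the relay selection schemes under consideration, the smallest average delivered energy results when the decision does not exploit the energy-channel CSI $\{\gamma_{\mathcal{R}_i\mathcal{H}}\}$, i.e.\ when $\mathcal{R}_{s}=\mathcal{R}_{\kappa}$, whose index $\kappa=\arg\max_{i}\gamma_i$ depends only on $\{\gamma_i\}$. Because the information channels $\{\gamma_i\}$ are independent of the energy channels $\{\varepsilon_i\}$, the random index $\kappa$ is independent of the vector $(\varepsilon_{1},\dots,\varepsilon_{N})$; conditioning on $\kappa$ and applying the law of total expectation yields $\mathbb{E}[\varepsilon_{s}]=\sum_{i=1}^{N}\Pr(\kappa=i)\,\mathbb{E}[\varepsilon_i]=\bar{\varepsilon}$. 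Any scheme that lets the energy CSI bias the choice in favour of $\mathcal{H}$ can only raise this value, which establishes \eqref{emin}.

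For the maximum, the extreme is attained by $\mathcal{R}_{s}=\mathcal{R}_{\lambda}=\arg\max_{i}\varepsilon_i$, so $\epsilon_{\max}=\mathbb{E}\!\left[\max_{i=1,\dots,N}\varepsilon_i\right]$, i.e.\ the mean of the largest order statistic of $N$ i.i.d.\ exponential RVs of mean $\bar{\varepsilon}$. I would obtain the closed form in one of two equivalent ways: either (i) from $\mathbb{E}[\max_{i}\varepsilon_i]=\int_{0}^{\infty}\!\big(1-(1-e^{-x/\bar{\varepsilon}})^{N}\big)\,dx$, expanding $1-(1-u)^{N}=\sum_{k=1}^{N}\binom{N}{k}(-1)^{k+1}u^{k}$ with $u=e^{-x/\bar{\varepsilon}}$, integrating termwise to get $\bar{\varepsilon}\sum_{k=1}^{N}\binom{N}{k}(-1)^{k+1}/k$, and then invoking the identity $\sum_{k=1}^{N}\binom{N}{k}(-1)^{k+1}/k=H_{N}$; or (ii) more directly via the R\'enyi representation of exponential order statistics, $\max_{i=1,\dots,N}\varepsilon_i\stackrel{d}{=}\sum_{k=1}^{N}E_{k}/k$ with $E_{1},\dots,E_{N}$ i.i.d.\ exponential of mean $\bar{\varepsilon}$, which gives $\mathbb{E}[\max_{i}\varepsilon_i]=\bar{\varepsilon}\sum_{k=1}^{N}1/k=H_{N}\bar{\varepsilon}$ at once, proving \eqref{emax}.

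The calculations are short; the only points requiring care are the combinatorial identity in route (i) (equivalently, justifying the spacing/R\'enyi decomposition in route (ii)) and making explicit why $\bar{\varepsilon}$ is genuinely the infimum over the class of schemes considered — the latter rests entirely on the independence of the information and energy channels, which is precisely the feature that creates the tradeoff studied in this paper.
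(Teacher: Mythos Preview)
Your proposal is correct and follows essentially the same approach as the paper: the paper obtains \eqref{emax} by citing the first moment of the maximum of $N$ i.i.d.\ exponential RVs from order-statistics references, and obtains \eqref{emin} by noting that a selection rule independent of the $a_{\mathcal{R}_i\mathcal{H}}$'s yields $\bar{\varepsilon}$. Your write-up is simply more explicit---spelling out the conditioning argument for the minimum and supplying two standard derivations (tail-integral/binomial and R\'enyi spacings) for the $H_N\bar{\varepsilon}$ result---where the paper is content to cite a reference.
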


\begin{proof}
Since the maximum energy transfer occurs for $s=\lambda$, (\ref{emax}) is
obtained as the first order moment of $N$ i.i.d. exponentially distributed
RVs \cite{B:David}. Eq. (\ref{emin}) is trivially obtained by assuming that
the relay selection process is independent of $a_{\mathcal{R}_{i}\mathcal{H}%
} $, $i=1,...,N$.
\end{proof}

\subsubsection{Tradeoff factor}

Let $\epsilon$ denote the average (long-term) energy harvested by $\mathcal{H%
}$. Out of the two exchanged metrics (i.e., capacity $C$ and energy transfer
$\epsilon$), $\epsilon$ is considered as the reference metric. Hence, the
tradeoff factor, $\delta$ ($0\leq\delta\leq1$), is the percentage of the
range of possible energy transfer that is actually transferred to $\mathcal{H%
}$. Considering the boundaries of $\epsilon$ given in (\ref{emax}), (\ref%
{emin}), we can express $\epsilon$ in terms of the tradeoff factor as $%
\epsilon=\left[ 1+\delta\left( H_{N}-1\right) \right] \bar{\varepsilon}$.
The tradeoff factor is thus obtained by solving this expression with respect
to $\delta$, yielding

\begin{equation}
\delta=\frac{\frac{\epsilon}{\bar{\varepsilon}}-1}{H_{N}-1}.  \label{delta}
\end{equation}

\subsection{\label{TS0}Time-Sharing Selection Scheme}

The time-sharing scheme is considered as a simplistic selection method which
operates as follows: In each transmission frame, the CU selects either $%
\mathcal{R}_{\kappa }$ or $\mathcal{R}_{\lambda }$ in a pseudorandom
fashion. That is, $\mathcal{R}_{\kappa }$ is selected with probability $\mu $%
; $\mathcal{R}_{\lambda }$ is selected with probability $1-\mu $, i.e.%
\begin{equation}
s=\left\{
\begin{array}{c}
\kappa ,\text{ with probability }\mu \\
\lambda ,\text{ \ with probability }1-\mu%
\end{array}%
\right. .  \label{TS1}
\end{equation}%
This strategy ensures that, in the long run, the percentage of transmission
frames allocated for optimum information transmission and optimum energy
transfer is controllable.

The capacity of the time-sharing scheme equals $C_{\max}$ for the time
frames when $s=\kappa$, and $C_{\min}$ for the time frames where $s=\lambda$%
. Consequently, the ergodic capacity is obtained as%
\begin{equation}
C_{TS}=\mu C_{\max}+\left( 1-\mu\right) C_{\min}.  \label{CPS}
\end{equation}
Similarly, the energy transfer to $\mathcal{H}$ equals $\epsilon_{\min}$ if $%
s=\kappa$ and $\epsilon_{\max}$ if $s=\lambda$. Hence, the average energy
transferred to $\mathcal{H}$ when the time-sharing selection method is
employed is given by%
\begin{equation}
\epsilon_{TS}=\mu\bar{\varepsilon}+\left( 1-\mu\right) \bar{\varepsilon }%
~H_{N}=\bar{\varepsilon}~\left[ \mu+\left( 1-\mu\right) ~H_{N}\right] .
\label{eps1}
\end{equation}
Solving (\ref{eps1}) with respect to $\mu$ yields%
\begin{equation}
\mu=\frac{\bar{\varepsilon}H_{N}-\epsilon_{TS}}{\bar{\varepsilon}\left(
H_{N}-1\right) }.  \label{mu}
\end{equation}
By plugging (\ref{mu}) into (\ref{CPS}), $C_{TS}$ is expressed as a function
of $\epsilon_{TS}$ as follows
\begin{equation}
C_{TS}=\frac{e^{\frac{2}{\bar{\gamma}}}\left( \epsilon_{TS}-\bar{\varepsilon
}\right) \mathcal{E}_{1}\left( \frac{2}{\bar{\gamma}}\right) +\left( \bar{%
\varepsilon}~H_{N}-\epsilon_{TS}\right) \ln\left( 4\right) \sum _{j=0}^{N-1}%
\frac{N\left( -1\right) ^{j}e^{\frac{2\left( j+1\right) }{\bar{\gamma}}%
\binom{N-1}{j}}\mathcal{E}_{1}\left( \frac{2\left( j+1\right) }{\bar{\gamma}}%
\right) }{2\left( j+1\right) \ln\left( 2\right) }}{2\bar{\varepsilon}~\left(
H_{N}-1\right) \ln\left( 2\right) }.  \label{CTS2}
\end{equation}

\begin{corollary}
The time-sharing scheme results in an equal opportunity cost between ergodic
capacity and average energy transfer.
\end{corollary}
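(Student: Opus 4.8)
The plan is to reduce the statement to the definition of ``equal opportunity cost'' given in Section~\ref{PRE}: such a tradeoff is precisely one whose tradeoff curve is linear (affine), i.e. has constant slope, so that a fixed change in the reference metric always entails the same absolute change in the cost metric. Hence it suffices to verify that the curve traced out in the $(\epsilon_{TS},C_{TS})$ plane as the design parameter $\mu$ ranges over $[0,1]$ is a straight line.

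First I would note that both coordinates are affine in $\mu$. Equation~(\ref{CPS}) rewrites as $C_{TS}=C_{\min}+\mu\,(C_{\max}-C_{\min})$, and equation~(\ref{eps1}) rewrites as $\epsilon_{TS}=\bar{\varepsilon}H_{N}-\mu\,\bar{\varepsilon}(H_{N}-1)$. For $N\ge 2$ we have $H_{N}>1$, so the coefficient $\bar{\varepsilon}(H_{N}-1)$ is strictly positive; the second relation is therefore an affine bijection from $\mu\in[0,1]$ onto $\epsilon_{TS}\in[\bar{\varepsilon},H_{N}\bar{\varepsilon}]=[\epsilon_{\min},\epsilon_{\max}]$, whose inverse is exactly (\ref{mu}). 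Substituting this affine expression for $\mu$ into $C_{TS}=C_{\min}+\mu\,(C_{\max}-C_{\min})$ and using that a composition of affine maps is affine, I conclude that $C_{TS}$ is an affine function of $\epsilon_{TS}$ with the constant slope $dC_{TS}/d\epsilon_{TS}=-(C_{\max}-C_{\min})/[\bar{\varepsilon}(H_{N}-1)]<0$; this is precisely the relation made explicit in (\ref{CTS2}). A constant-slope tradeoff curve means that any prescribed increment of the reference metric $\epsilon_{TS}$ costs the same absolute decrement of $C_{TS}$ regardless of the operating point, which is the definition of an equal-opportunity-cost tradeoff, so the corollary follows.

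I do not expect a genuine obstacle here: the result is the elementary fact that eliminating a common parameter from two affine parametrizations yields a line. The only point needing a word of care is the degenerate case $N=1$, where $H_{N}=1$, $\kappa=\lambda$ always, and the feasible set collapses to a single point so that no tradeoff exists; the statement is thus understood for $N\ge 2$, for which $\bar{\varepsilon}(H_{N}-1)>0$ and the argument above applies verbatim.
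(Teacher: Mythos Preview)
Your proof is correct and takes essentially the same approach as the paper: both observe that $C_{TS}$ is an affine (linear) function of $\epsilon_{TS}$, which by definition yields an equal-opportunity-cost tradeoff. The paper simply cites (\ref{CTS2}) directly, while you spell out the affine-in-$\mu$ parametrization and elimination that underlies it; your additional remark on the degenerate $N=1$ case is a nice clarification.
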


\begin{proof}
The proof follows directly from (\ref{CTS2}), by noting that $C_{TS}$ is a
linear function of $\epsilon_{TS}$.
\end{proof}

\subsection{\label{TB0}Threshold-Checking Selection Scheme}

The time-sharing scheme gives insight into the tradeoff between information
transmission and energy transfer, yet, clearly, it is far from making full
use of the relays. Thus, we consider an alternative selection scheme, which
operates as follows. In each transmission session the SNR of the $\mathcal{S}
$-$\mathcal{R}_{\kappa }$-$\mathcal{D}$ link, $\gamma _{\kappa }$, is
compared to a threshold, $\tau $. If $\gamma _{\kappa }\geq \tau $, then $%
\mathcal{R}_{\kappa }$ is activated. Otherwise, $\mathcal{R}_{\lambda }$ is
activated. In mathematical terms,%
\begin{equation}
s=\left\{
\begin{array}{c}
\kappa ,\text{ if }\gamma _{\kappa }\geq \tau \\
\lambda ,\text{ \ if }\gamma _{\kappa }<\tau%
\end{array}%
\right. .  \label{TB1}
\end{equation}%
Naturally, this selection scheme is expected to lead to a better
capacity-energy transfer tradeoff. A rigorous analysis of this tradeoff
follows.

Considering the threshold-checking scheme's mode of operation, the ergodic
capacity of the information transfer to $\mathcal{D}$ is obtained as (for
the derivation, please refer to Appendix \ref{ApCTB})%
\begin{align}
C_{TC}& =E\left\{ C_{\kappa }\left\vert \gamma _{\kappa }\geq \tau \right.
\right\} \Pr \left\{ \gamma _{\kappa }\geq \tau \right\} +E\left\{
C_{\lambda }\left\vert \gamma _{\kappa }<\tau \right. \right\} \Pr \left\{
\gamma _{\kappa }<\tau \right\}  \notag \\
& =\sum_{j=0}^{M-1}\frac{N\left( -1\right) ^{j}\binom{N-1}{j}e^{-\frac{%
2\left( j+1\right) \tau }{\bar{\gamma}}}\left[ e^{\frac{2\left( j+1\right)
\left( \tau +1\right) }{\bar{\gamma}}}\mathcal{E}_{1}\left( \frac{2\left(
j+1\right) \left( \tau +1\right) }{\bar{\gamma}}\right) +\ln \left( \tau
+1\right) \right] }{2\left( j+1\right) \ln \left( 2\right) }  \notag \\
& +\frac{e^{\frac{2}{\bar{\gamma}}}\left[ \mathcal{E}_{1}\left( \frac{2}{%
\bar{\gamma}}\right) -\mathcal{E}_{1}\left( \frac{2\left( 1+\tau \right) }{%
\bar{\gamma}}\right) \right] -e^{-\frac{2\tau }{\bar{\gamma}}}\ln \left(
1+\tau \right) }{2\ln \left( 2\right) }\left( 1-e^{-\frac{2\tau }{\bar{\gamma%
}}}\right) ^{N-1}  \label{CTB}
\end{align}%
where $E\left\{ \cdot \right\} $ denotes expectation. Similarly to (\ref%
{eps1}), the average energy transfer to $\mathcal{H}$ is obtained as%
\begin{align}
\epsilon _{TC}& =\Pr \left\{ \gamma _{\kappa }\geq \tau \right\} \bar{%
\varepsilon}+\Pr \left\{ \gamma _{\kappa }<\tau \right\} \int_{0}^{\infty
}xf_{\varepsilon _{\lambda }}\left( x\right) dx  \notag \\
& =\left[ 1-\left( 1-e^{-\frac{2\tau }{\bar{\gamma}}}\right) ^{N}\right]
\bar{\varepsilon}+\left( 1-e^{-\frac{2\tau }{\bar{\gamma}}}\right) ^{N}\bar{%
\varepsilon}~H_{N}.  \label{ETB}
\end{align}%
Solving (\ref{ETB}) with respect to $\tau $ yields%
\begin{equation}
\tau =\bar{\gamma}\ln \left[ \left( 1-\sqrt[N]{\frac{\epsilon _{TC}-\bar{%
\varepsilon}}{\bar{\varepsilon}\left( H_{N}-1\right) }}\right) ^{-1/2}\right]
.  \label{tau}
\end{equation}%
Substituting (\ref{tau}) into (\ref{CTB}), we can express $C_{TC}$ as a
function of $\epsilon _{TC}$, as shown below%
\begin{align}
& C_{TC}=\frac{\left( \frac{\epsilon _{TC}-\bar{\varepsilon}}{\bar{%
\varepsilon}\left( H_{N}-1\right) }\right) ^{\frac{N-1}{N}}}{2\ln \left(
2\right) }\Bigg[e^{\frac{2}{\bar{\gamma}}}\mathcal{E}_{1}\left( \frac{2}{%
\bar{\gamma}}\right) -e^{\frac{2}{\bar{\gamma}}}\mathcal{E}_{1}\left( \frac{2%
}{\bar{\gamma}}-\ln \left( 1-\sqrt[N]{\frac{\epsilon _{TC}-\bar{\varepsilon}%
}{\bar{\varepsilon}\left( H_{N}-1\right) }}\right) \right) \Bigg.  \notag \\
& \Bigg.-\left( 1-\sqrt[N]{\frac{\epsilon _{TC}-\bar{\varepsilon}}{\bar{%
\varepsilon}\left( H_{N}-1\right) }}\right) \ln \left( 1-\frac{\bar{\gamma}%
\ln \left( 1-\sqrt[N]{\frac{\epsilon _{TC}-\bar{\varepsilon}}{\bar{%
\varepsilon}\left( H_{N}-1\right) }}\right) }{2}\right) \Bigg]  \label{CTB02}
\\
& +\sum_{j=0}^{M-1}\frac{e^{\frac{2\left( j+1\right) \left[ 1-\frac{\bar{%
\gamma}}{2}\ln \left( 1-\sqrt[N]{\frac{\epsilon _{TC}-\bar{\varepsilon}}{%
\bar{\varepsilon}\left( H_{N}-1\right) }}\right) \right] }{\bar{\gamma}}}%
\mathcal{E}_{1}\left( \frac{2\left( j+1\right) \left[ 1-\frac{\bar{\gamma}}{2%
}\ln \left( 1-\sqrt[N]{\frac{\epsilon _{TC}-\bar{\varepsilon}}{\bar{%
\varepsilon}\left( H_{N}-1\right) }}\right) \right] }{\bar{\gamma}}\right)
+\ln \left[ 1-\frac{\bar{\gamma}}{2}\ln \left( 1-\sqrt[N]{\frac{\epsilon
_{TC}-\bar{\varepsilon}}{\bar{\varepsilon}\left( H_{N}-1\right) }}\right) %
\right] }{2\left[ N\binom{N-1}{j}\left[ 1-\sqrt[N]{\frac{\epsilon _{TC}-\bar{%
\varepsilon}}{\bar{\varepsilon}\left( H_{N}-1\right) }}\right] ^{j+1}\right]
^{-1}\left( -1\right) ^{j}\left( j+1\right) \ln \left( 2\right) .}  \notag
\end{align}

\begin{corollary}
\label{COR1}The tradeoff between ergodic capacity and average energy
transfer of the threshold-checking scheme is an increasing opportunity cost
tradeoff.
\end{corollary}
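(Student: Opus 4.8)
\emph{Proof proposal.}
The plan is to regard the capacity--energy curve of the threshold-checking scheme as a smooth curve parametrized by the threshold $\tau\in(0,\infty)$ and to prove it is strictly concave, which by the terminology of Section~\ref{PRE} is exactly an increasing opportunity cost tradeoff (we assume $N\ge 2$; for $N=1$ the curve degenerates to a single point). As $\tau$ increases from $0$ to $\infty$, (\ref{ETB}) shows $\epsilon_{TC}(\tau)$ increases strictly from $\bar\varepsilon$ to $H_N\bar\varepsilon$ (its derivative is a positive multiple of that of $(1-e^{-2\tau/\bar\gamma})^{N}$), while by (\ref{CTB}) $C_{TC}(\tau)$ decreases strictly from $C_{\max}$ to $C_{\min}$. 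Hence it suffices to show that the slope $dC_{TC}/d\epsilon_{TC}=C_{TC}'(\tau)/\epsilon_{TC}'(\tau)$ is strictly decreasing in $\tau$, equivalently in $\epsilon_{TC}$; a continuously differentiable function with strictly decreasing derivative is strictly concave.

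First I would write $C_{TC}$ in the pre-integration form from which (\ref{CTB}) is derived,
\[
C_{TC}(\tau)=\int_{\tau}^{\infty}\log_{2}(1+x)f_{\kappa}(x)\,dx+F_{\gamma_i}(\tau)^{N-1}\int_{0}^{\tau}\log_{2}(1+x)f_{\gamma_{i}}(x)\,dx,
\]
where $f_{\gamma_{i}}$, $F_{\gamma_{i}}$ are the common density and CDF from (\ref{gi}) and $f_{\kappa}=N F_{\gamma_i}^{N-1}f_{\gamma_{i}}$ is the density of $\gamma_{\kappa}=\max_{i}\gamma_{i}$; the second term is $E\{C_{\lambda}\mathbf{1}(\gamma_{\kappa}<\tau)\}$, which collapses to this product because $\lambda$ depends only on the energy channels and is therefore independent of, and uniform over, the exchangeable SNRs $\gamma_{1},\dots,\gamma_{N}$. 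Differentiating and using $f_{\kappa}(\tau)=N F_{\gamma_i}(\tau)^{N-1}f_{\gamma_{i}}(\tau)$, the two $\log_{2}(1+\tau)$ boundary terms partially cancel, and after dividing by $\epsilon_{TC}'(\tau)=(H_N-1)\bar\varepsilon\,N F_{\gamma_i}(\tau)^{N-1}f_{\gamma_{i}}(\tau)$ one is left with
\[
\frac{dC_{TC}}{d\epsilon_{TC}}=\frac{N-1}{N(H_N-1)\bar\varepsilon}\,\psi(\tau),\qquad \psi(\tau):=E\{\log_{2}(1+\gamma_{i})\mid\gamma_{i}<\tau\}-\log_{2}(1+\tau).
\]
Note $\psi(\tau)<0$, consistent with $C_{TC}$ being decreasing. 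The whole claim thus reduces to showing $\psi$ is strictly decreasing on $(0,\infty)$.

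To see this, on the event $\{\gamma_{i}<\tau\}$ write $\log_{2}(1+\tau)-\log_{2}(1+\gamma_{i})=\frac{1}{\ln 2}\int_{\gamma_{i}}^{\tau}\frac{dt}{1+t}$, take the conditional expectation, interchange expectation and integration, and substitute $t=\tau u$; since $\Pr\{\gamma_{i}<t\mid\gamma_{i}<\tau\}=(1-e^{-2t/\bar\gamma})/(1-e^{-2\tau/\bar\gamma})$ this gives
\[
-\psi(\tau)=\frac{1}{\ln 2}\int_{0}^{1}\frac{\tau}{1+\tau u}\cdot\frac{1-e^{-2\tau u/\bar\gamma}}{1-e^{-2\tau/\bar\gamma}}\,du.
\]
Both factors in the integrand are positive, and each is strictly increasing in $\tau$ for every fixed $u\in(0,1)$: the first equals $(u+1/\tau)^{-1}$; for the second, with $a=2\tau/\bar\gamma$, the logarithmic derivative of $(1-e^{-au})/(1-e^{-a})$ in $a$ has the sign of $u(e^{a}-1)-(e^{au}-1)$, which is positive because $x\mapsto(e^{x}-1)/x$ is strictly increasing and $au<a$. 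Hence the integrand increases in $\tau$ for almost every $u$, so $-\psi$ is strictly increasing and $\psi$ strictly decreasing. Since $\tau$ is a strictly increasing function of $\epsilon_{TC}$, the slope $dC_{TC}/d\epsilon_{TC}$ is then strictly decreasing in $\epsilon_{TC}$, the curve is strictly concave, and the tradeoff has increasing opportunity cost.

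I expect the main obstacle to be the bookkeeping of the differentiation step: it is essential to work from the pre-integration form rather than from the explicit expression (\ref{CTB02}), so that the cancellation producing the compact $\psi(\tau)$ representation is visible. The only genuinely non-routine inequality is the monotonicity in $a$ of $(1-e^{-au})/(1-e^{-a})$, and that rests entirely on the classical fact that $(e^{x}-1)/x$ is increasing on $(0,\infty)$.
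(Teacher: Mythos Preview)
Your argument is correct. The paper's own proof is a single sentence asserting that $\partial^{2}C_{TC}/\partial\epsilon_{TC}^{2}<0$ on $(\bar\varepsilon,H_{N}\bar\varepsilon)$, implicitly by differentiating the explicit expression (\ref{CTB02}); no details are given. Your route is genuinely different and considerably more transparent: instead of differentiating the closed form twice, you parametrize by $\tau$, work from the pre-integration representation of $C_{TC}$, and reduce the concavity question to the monotonicity of the scalar function $\psi(\tau)=E\{\log_{2}(1+\gamma_{i})\mid\gamma_{i}<\tau\}-\log_{2}(1+\tau)$. The integral representation of $-\psi$ after the substitution $t=\tau u$ is clean, and the two monotonicity facts you invoke (that $\tau/(1+\tau u)$ and $(1-e^{-au})/(1-e^{-a})$ are increasing in $\tau$ for $0<u<1$) are elementary and fully justified. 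What your approach buys is a proof that actually exhibits \emph{why} the curve is concave, and it would survive replacing the exponential distribution by any distribution for which $F(\tau u)/F(\tau)$ is nondecreasing in $\tau$ for $u\in(0,1)$; the paper's brute-force second derivative gives no such insight and would be tedious to carry out on (\ref{CTB02}). The only cosmetic slip is the missing pre-log factor $1/2$ from the half-duplex assumption, which cancels in the slope and does not affect the argument.
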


\begin{proof}
The proof follows by showing that $\partial ^{2}C_{TC}/\partial \epsilon
_{TC}^{2}<0$ for each $\bar{\varepsilon}<\epsilon _{TC}<H_{N}\bar{\varepsilon%
}$.
\end{proof}

\subsection{\label{WD0}Weighted Difference Selection Scheme}

Let us now focus on the case of $N=2$, i.e., the case where two relays are
available. For this case, the weighted difference selection method operates
as follows%
\begin{equation}
\gamma_{1}-\gamma_{2}%
\begin{array}{c}
\overset{s=1}{>} \\
\underset{s=2}{<}%
\end{array}
\nu\left( \varepsilon_{2}-\varepsilon_{1}\right)  \label{WD}
\end{equation}
where $\nu>0$ is a constant characterizing the tradeoff factor.

\subsubsection{Intuition Behind the Weighted Difference Scheme}

The intuition behind the weighted difference selection policy in (\ref{WD})
is simple: If one of the two relays experiences stronger links to $\mathcal{D%
}$ and $\mathcal{H}$ than the other, then this relay should be selected.
Otherwise, the relay with stronger channel to $\mathcal{D}$ is selected if
more priority is given to information transmission than to energy transfer,
and vice versa. This intuition is clearly illustrated in (\ref{WD}): If $%
\mathcal{R}_{1}$, for instance, is superior (inferior) to $\mathcal{R}_{2}$
in terms of both information and energy transfer, then the left hand side of
(\ref{WD}) will be positive (negative) and the right hand side of (\ref{WD})
negative (positive), leading to $s=1$ ($s=2$). If, on the other hand, $%
\mathcal{R}_{1}$ is superior (inferior) to $\mathcal{R}_{2}$ in terms of
information rate but inferior (superior) to $\mathcal{R}_{2}$ in terms of
energy transfer, then the two sides of (\ref{WD}) have the same sign hence
the selected relay is determined by the weighting coefficient, $\nu $.
Clearly, $\nu $ can take any positive real value therefore any desired
tradeoff factor can be achieved by properly adjusting $\nu $.

\subsubsection{Tradeoff Expression}

An expression for the tradeoff achieved by the weighted difference scheme is
provided in the ensuing Proposition.

\begin{proposition}
\label{Prop1}The capacity-energy tradeoff of the weighted difference scheme
is given by%
\begin{align}
C_{WD} & =\frac{2\left[ 1-\left( 1-\sqrt{\frac{\bar{\varepsilon}}{3\bar{%
\varepsilon}-2\epsilon_{WD}}}\right) ^{2}\right] \mathcal{E}_{1}\left( \frac{%
2}{\bar{\gamma}}\right) -e^{\frac{2}{\bar{\gamma}}}\mathcal{E}_{1}\left(
\frac{4}{\bar{\gamma}}\right) }{2e^{-\frac{2}{\bar{\gamma}}}\left[ 1-\left(
1-\sqrt{\frac{\bar{\varepsilon}}{3\bar{\varepsilon}-2\epsilon_{WD}}}\right)
^{2}\right] \ln\left( 2\right) }  \notag \\
& +\frac{\exp\left( -\frac{2}{\bar{\gamma}\left( 1-\sqrt{\frac {\bar{%
\varepsilon}}{3\bar{\varepsilon}-2\epsilon_{WD}}}\right) }\right) \left( 1-%
\sqrt{\frac{\bar{\varepsilon}}{3\bar{\varepsilon}-2\epsilon_{WD}}}\right)
^{2}\mathcal{E}_{1}\left( \frac{2\left( \frac{1}{1-\sqrt {\frac{\bar{%
\varepsilon}}{3\bar{\varepsilon}-2\epsilon_{WD}}}}-1\right) }{\bar{\gamma}}%
\right) }{2e^{-\frac{2}{\bar{\gamma}}}\left[ 1-\left( 1-\sqrt{\frac{\bar{%
\varepsilon}}{3\bar{\varepsilon}-2\epsilon_{WD}}}\right) ^{2}\right]
\ln\left( 2\right) }.  \label{WDTrad}
\end{align}
\end{proposition}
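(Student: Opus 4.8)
The plan is to use the fact that the two relays are statistically identical in order to reduce both $C_{WD}$ and $\epsilon_{WD}$ to a single two-dimensional expectation over the joint law of $\left(\gamma_{1},\varepsilon_{1}\right)$, to evaluate this expectation in closed form as a function of $\nu$, and then to eliminate $\nu$ in favour of $\epsilon_{WD}$. First I would rewrite the selection rule (\ref{WD}) as $\left\{s=1\right\}=\left\{\gamma_{1}+\nu\varepsilon_{1}>\gamma_{2}+\nu\varepsilon_{2}\right\}=\left\{U_{1}>U_{2}\right\}$ with $U_{i}:=\gamma_{i}+\nu\varepsilon_{i}$. Since ties occur with probability zero and the rule is invariant under swapping the relay indices together with $\left(\gamma_{1},\varepsilon_{1}\right)\leftrightarrow\left(\gamma_{2},\varepsilon_{2}\right)$, the i.i.d.\ assumption gives $C_{WD}=2\,E\!\left\{\tfrac{1}{2}\log_{2}\!\left(1+\gamma_{1}\right)\mathbf{1}\{s=1\}\right\}$ and $\epsilon_{WD}=2\,E\!\left\{\varepsilon_{1}\mathbf{1}\{s=1\}\right\}$. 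Conditioning on $\left(\gamma_{1},\varepsilon_{1}\right)$, each of these becomes $E\!\left\{\Phi\,F_{W}\!\left(\gamma_{1}+\nu\varepsilon_{1}\right)\right\}$, where $W:=\gamma_{2}+\nu\varepsilon_{2}$ and $\Phi$ stands for $\tfrac{1}{2}\log_{2}\!\left(1+\gamma_{1}\right)$ and $\varepsilon_{1}$, respectively.

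The next step is to evaluate $F_{W}$. By (\ref{gi}), $\gamma_{2}$ is exponential with rate $a:=2/\bar{\gamma}$ while $\nu\varepsilon_{2}$ is exponential with rate $b:=1/\left(\nu\bar{\varepsilon}\right)$, so $W$ is hypoexponential with $F_{W}\left(w\right)=1+\left(b\,e^{-aw}-a\,e^{-bw}\right)/\left(a-b\right)$ whenever $a\neq b$; the single value $\nu=\bar{\gamma}/\left(2\bar{\varepsilon}\right)$ at which $a=b$ (so that $W$ is Erlang) is recovered afterwards by continuity, the final expression being analytic in $\nu$. Substituting $w=\gamma_{1}+\nu\varepsilon_{1}$ and exploiting the independence of $\gamma_{1}$ and $\varepsilon_{1}$ factorizes each expectation into products of elementary one-dimensional integrals, which I would compute from $E\!\left\{e^{-\mu\gamma_{1}}\right\}=a/\left(a+\mu\right)$, $E\!\left\{e^{-\mu\nu\varepsilon_{1}}\right\}=1/\left(1+\mu\nu\bar{\varepsilon}\right)$ and $E\!\left\{\varepsilon_{1}e^{-\mu\nu\varepsilon_{1}}\right\}=\bar{\varepsilon}/\left(1+\mu\nu\bar{\varepsilon}\right)^{2}$, together with the logarithmic integral $\int_{0}^{\infty}\ln\!\left(1+x\right)e^{-\mu x}\,dx=e^{\mu}\mathcal{E}_{1}\!\left(\mu\right)/\mu$ (one integration by parts followed by the substitution $t=1+x$) and its weighted version $\int_{0}^{\infty}\ln\!\left(1+x\right)\lambda e^{-\left(\lambda+\mu\right)x}\,dx=\tfrac{\lambda}{\lambda+\mu}e^{\lambda+\mu}\mathcal{E}_{1}\!\left(\lambda+\mu\right)$. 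This produces closed forms $\epsilon_{WD}\!\left(\nu\right)$ and $C_{WD}\!\left(\nu\right)$ in which the only special functions are $\mathcal{E}_{1}$ evaluated at $2/\bar{\gamma}$, $4/\bar{\gamma}$ and $a+b$.

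A short simplification then gives $\epsilon_{WD}=\tfrac{\bar{\varepsilon}}{2}\!\left(3-\left(1+2\nu\bar{\varepsilon}/\bar{\gamma}\right)^{-2}\right)$, consistent with the limiting values $\bar{\varepsilon}$ (as $\nu\to0$) and $H_{2}\bar{\varepsilon}=\tfrac{3}{2}\bar{\varepsilon}$ (as $\nu\to\infty$) from Lemma~\ref{lem2}. Solving this for $\nu$ — equivalently, introducing the auxiliary quantity $1+2\nu\bar{\varepsilon}/\bar{\gamma}=\sqrt{\bar{\varepsilon}/\left(3\bar{\varepsilon}-2\epsilon_{WD}\right)}$ — and substituting back into $C_{WD}\!\left(\nu\right)$ yields (\ref{WDTrad}). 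As a consistency check, letting $\epsilon_{WD}\to\bar{\varepsilon}$ and $\epsilon_{WD}\to\tfrac{3}{2}\bar{\varepsilon}$ collapses (\ref{WDTrad}) to $C_{\max}$ and $C_{\min}$ of Lemma~\ref{lem1} with $N=2$. There is no genuine conceptual obstacle here: the two points that require care are the degenerate rate coincidence $a=b$ noted above and, above all, the algebraic bookkeeping in the final elimination of $\nu$, where several $\mathcal{E}_{1}$ terms with related arguments and the factors $\left(a-b\right)$, $\left(a+b\right)$ must be combined over a common denominator.
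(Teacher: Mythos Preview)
Your approach is correct and takes a genuinely different, more economical route than the paper. The paper (Appendix~\ref{PrProp}) partitions the sample space into four cases according to the signs of $\gamma_{1}-\gamma_{2}$ and $\varepsilon_{1}-\varepsilon_{2}$, writes both $\epsilon_{WD}$ and $C_{WD}$ as sums of six four-fold integrals ($\mathcal{I}_{1},\mathcal{I}_{2a},\mathcal{I}_{2b},\mathcal{I}_{3a},\mathcal{I}_{3b},\mathcal{I}_{4}$ and the analogous $\mathcal{J}$-integrals), evaluates each via \cite[Eqs.~(3.351.7), (4.337.2)]{B:Gra_Ryz_Book}, and only then notices the pairwise equalities $\mathcal{I}_{1}=\mathcal{I}_{4}$, $\mathcal{I}_{2a}=\mathcal{I}_{3a}$, $\mathcal{I}_{2b}=\mathcal{I}_{3b}$ as an \emph{a posteriori} consequence of the i.i.d.\ assumption. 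You instead exploit from the outset the linearity of the selection rule, rewriting $\{s=1\}=\{U_{1}>U_{2}\}$ with $U_{i}=\gamma_{i}+\nu\varepsilon_{i}$; symmetry then halves the work immediately, and conditioning on $(\gamma_{1},\varepsilon_{1})$ replaces the case split by a single evaluation of the hypoexponential CDF $F_{W}$, after which independence of $\gamma_{1}$ and $\varepsilon_{1}$ factorizes everything into one-dimensional integrals. Both routes reach the same intermediate expression $\epsilon_{WD}=\tfrac{\bar{\varepsilon}}{2}\bigl[3-\bar{\gamma}^{2}/(\bar{\gamma}+2\nu\bar{\varepsilon})^{2}\bigr]$ and then invert to obtain (\ref{WDTrad}). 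What your approach buys is a substantially shorter computation with no case analysis; what the paper's approach buys is that it does not rely on recognising the selection statistic as a sum, and so would adapt with less modification to selection rules that are not linear in $(\gamma_{i},\varepsilon_{i})$, such as the Pareto-optimal rule (\ref{PO}) treated later.
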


\begin{proof}
The proof is provided in Appendix \ref{PrProp}.
\end{proof}

\begin{corollary}
The tradeoff between ergodic capacity and average energy transfer of the
weighted difference scheme is an increasing opportunity cost tradeoff.
\end{corollary}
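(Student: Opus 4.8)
The plan is to follow the same route as the proof of Corollary \ref{COR1}. By the terminology fixed in Section \ref{PRE}, a tradeoff exhibits increasing opportunity cost exactly when the curve $\epsilon_{WD}\mapsto C_{WD}$ is strictly concave, so it suffices to establish that $\partial^{2}C_{WD}/\partial\epsilon_{WD}^{2}<0$ throughout the feasible range. For $N=2$ one has $H_{2}=3/2$, so by Lemmas \ref{lem1}--\ref{lem2} this range is the open interval $\bar{\varepsilon}<\epsilon_{WD}<\tfrac{3}{2}\bar{\varepsilon}$, which coincides with the set on which the right-hand side of (\ref{WDTrad}) is well defined, the radicand $3\bar{\varepsilon}-2\epsilon_{WD}$ being positive precisely there.

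To keep the double differentiation tractable I would first reparametrize by the tradeoff factor: writing $\epsilon_{WD}=(1+\tfrac{1}{2}\delta)\bar{\varepsilon}$ with $\delta\in(0,1)$ as in (\ref{delta}) gives $3\bar{\varepsilon}-2\epsilon_{WD}=(1-\delta)\bar{\varepsilon}$, hence $\sqrt{\bar{\varepsilon}/(3\bar{\varepsilon}-2\epsilon_{WD})}=(1-\delta)^{-1/2}$, so that (\ref{WDTrad}) becomes a function of the single clean variable $w:=1-(1-\delta)^{-1/2}$. Note that $w<0$ on this interval, so that the recurring block $1-(1-w)^{2}=w(2-w)$ is negative; since it appears in both numerators and denominators of (\ref{WDTrad}) this sign must be tracked carefully. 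Differentiating in $w$ and composing with $dw/d\epsilon_{WD}$ then reduces everything to derivatives of a few building blocks, chiefly $x\mapsto e^{x}\mathcal{E}_{1}(x)$ and $\mathcal{E}_{1}$ evaluated at arguments depending on $w$, together with rational functions of $w$. Here I would use the elementary identities $\mathcal{E}_{1}^{\prime}(x)=-e^{-x}/x$ and $\frac{d}{dx}\bigl[e^{x}\mathcal{E}_{1}(x)\bigr]=e^{x}\mathcal{E}_{1}(x)-1/x$.

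After collecting terms, the sign of $\partial^{2}C_{WD}/\partial\epsilon_{WD}^{2}$ will be controlled by a combination of the quantities $e^{x}\mathcal{E}_{1}(x)$ at $x=2/\bar{\gamma}$ and $x=4/\bar{\gamma}$, a factor coming from the $w$-dependent exponential, and the derivative identities above; the sign of the terms with positive argument can be pinned down using the standard two-sided estimates $\tfrac{1}{2}\ln(1+2/x)<e^{x}\mathcal{E}_{1}(x)<\ln(1+1/x)$ and $\tfrac{1}{x+1}<e^{x}\mathcal{E}_{1}(x)<\tfrac{1}{x}$ (strict for every $x>0$, \cite[Eqs. (5.1.19), (5.1.20)]{B:Abr_Ste_Book}), which also deliver strictness of the concavity on the whole open interval. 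As an alternative that avoids working directly with the closed form (\ref{WDTrad}), one may retain the parametric description from Appendix \ref{PrProp}: with $C_{WD}=C_{WD}(\nu)$ and $\epsilon_{WD}=\epsilon_{WD}(\nu)$ one has $\partial^{2}C_{WD}/\partial\epsilon_{WD}^{2}=\bigl(C_{WD}^{\prime\prime}\epsilon_{WD}^{\prime}-C_{WD}^{\prime}\epsilon_{WD}^{\prime\prime}\bigr)/(\epsilon_{WD}^{\prime})^{3}$ (primes denoting $d/d\nu$), and since $\epsilon_{WD}$ is monotone in $\nu$ it remains only to sign the numerator. The main obstacle is purely computational: expression (\ref{WDTrad}) is cumbersome, so the crux is choosing the substitution that keeps the intermediate derivatives compact and then applying the $e^{x}\mathcal{E}_{1}(x)$ bounds sharply enough to conclude negativity on the entire interval rather than merely in limiting regimes.
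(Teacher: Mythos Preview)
Your proposal takes essentially the same approach as the paper: both argue that the curve is strictly concave by asserting $\partial^{2}C_{WD}/\partial\epsilon_{WD}^{2}<0$ on $\bar{\varepsilon}<\epsilon_{WD}<H_{2}\bar{\varepsilon}$. In fact you go further than the paper, which merely states that the second derivative is negative without indicating how to verify it; your reparametrization via $\delta$ (equivalently $w$), the parametric alternative $C_{WD}(\nu),\epsilon_{WD}(\nu)$, and the plan to control signs through the standard bounds on $e^{x}\mathcal{E}_{1}(x)$ are all reasonable ways to execute what the paper leaves implicit.
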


\begin{proof}
Similarly to Corollary \ref{COR1}, the proof follows by taking the second
derivative of $C_{WD}$ with respect to $\epsilon _{WD}$ and noting that $%
\partial ^{2}C_{WD}/\partial \epsilon _{WD}^{2}<0$ for each $\bar{\varepsilon%
}<\epsilon _{WD}<H_{N}\bar{\varepsilon}$.
\end{proof}

As shown later in Section \ref{NUM}, the weighted difference
scheme leads to a capacity-energy tradeoff that is close to the
Pareto frontier. However, it applies only for $N=2$. An extension
to $N>2$ is not straightforward since in that case multiple
comparisons among the candidate relays in terms of their
contributions to the overall capacity and the overall energy
transfer would be required.

\subsection{\label{IC0}On the CSI Requirements of the Proposed Schemes}

Here, we provide a brief discussion on the implementation complexity of the
three schemes under consideration, in terms of the amount of CSI required
for their operation. The time-sharing scheme has the lowest CSI requirement
of the three, because it requires CSI knowledge of either the $S$-$\mathcal{R%
}$-$\mathcal{D}$ or the $\mathcal{R}$-$\mathcal{H}$ links. The
threshold-checking scheme requires continuous CSI knowledge of the $S$-$%
\mathcal{R}$-$\mathcal{D}$ links; it additionally requires CSI knowledge of
the $\mathcal{R}$-$\mathcal{H}$ links for as long as the strength of the
end-to-end channel to $\mathcal{D}$ is below a given threshold. The weighted
difference scheme requires continuous CSI knowledge for all $S$-$\mathcal{R}$%
-$\mathcal{D}$ and $\mathcal{R}$-$\mathcal{H}$ links. Consequently, the
time-sharing scheme has relatively low, the threshold-checking scheme
medium, and the weighted difference scheme high requirements regarding CSI
knowledge. For the reader's convenience, the CSI requirements of the
proposed schemes are summarized in Table \ref{TABCSI}.

\begin{table}[ptb]
\caption{CSI Requirements and Application Scenarios of the Three
Considered Schemes } \label{TABCSI}\vspace{0.1cm}
\centering\renewcommand{\arraystretch}{1.6}
\begin{tabular}{|c|c|c|}
\hline \textbf{Relay Selection Scheme} & \textbf{CSI Requirements}
& \textbf{Number of Relays}% it Applies}
\\ \hline \emph{Time-Sharing
Scheme} & Low & $N\geq2$ \\ \hline \emph{Threshold-Checking
Scheme} & Medium & $N\geq2$ \\ \hline \emph{Weighted Difference
Scheme} & High & $N=2$ \\ \hline
\end{tabular}
%\newline
\end{table}

\section{\label{OUT}Outage and Asymptotic Performance for a Given Energy
Transfer}

\subsection{Outage Probability}

For schemes with constant transmission rate where the source does not have
transmit-side CSI, an outage occurs if the end-to-end link to the
destination cannot support the transmission rate. In DF relaying, an outage
in the $\mathcal{S}$-$\mathcal{R}_{i}$-$\mathcal{D}$ link occurs if either
the $\mathcal{S}$-$\mathcal{R}_{i}$ or the $\mathcal{R}_{i}$-$\mathcal{D}$
link is in outage. Consequently, denoting the fixed transmission rate by $r$%
, an outage occurs if the end-to-end SNR of the active relay, $\gamma _{s}$,
drops below the threshold $\gamma _{\text{th}}=2^{2r}-1$. Here, we provide
expressions for the outage probability of the three considered schemes,
assuming that a prescribed amount of energy is transferred to $\mathcal{H}$.

\subsubsection{Outage Probability of the Time-Sharing Scheme}

The simplicity of the mode of operation of the time-sharing scheme shown in (%
\ref{TS1}) allows for a straightforward evaluation of the outage probability
as%
\begin{align}
P_{out,TS} & =\mu\Pr\left\{ \gamma_{\kappa}<\gamma_{\text{th}}\right\}
+\left( 1-\mu\right) \Pr\left\{ \gamma_{\lambda}<\gamma_{\text{th}}\right\}
\notag \\
& =\mu\left( 1-e^{-\frac{2\gamma_{\text{th}}}{\bar{\gamma}}}\right)
^{N}+\left( 1-\mu\right) \left( 1-e^{-\frac{2\gamma_{\text{th}}}{\bar{\gamma}%
}}\right) .  \label{OutTS1}
\end{align}
Substituting $\mu$ from (\ref{mu}) into (\ref{OutTS1}) yields the outage
probability of the time-sharing scheme as a function of $\epsilon_{TS}$ and $%
\bar{\varepsilon}$,%
\begin{equation}
P_{out,TS}=\frac{e^{-\frac{2\gamma_{\text{th}}}{\bar{\gamma}}}\left\{ 1-%
\frac {\epsilon_{TS}}{\bar{\varepsilon}}+e^{\frac{2\gamma_{\text{th}}}{\bar{%
\gamma}}}\left[ \frac{\epsilon_{TS}}{\bar{\varepsilon}}+\left( H_{N}-\frac {%
\epsilon_{TS}}{\bar{\varepsilon}}\right) \left( 1-e^{-\frac{2\gamma_{\text{th%
}}}{\bar{\gamma}}}\right) ^{N}-1\right] \right\} }{H_{N}-1}.  \label{OutTS2}
\end{equation}
Alternatively, using (\ref{delta}) we can express the outage probability as
a function of the tradeoff factor, $\delta$, as%
\begin{equation}
P_{out,TS}=\left( 1-\delta\right) \left( 1-e^{-\frac{2\gamma_{\text{th}}}{%
\bar{\gamma}}}\right) ^{N}+\delta\left( 1-e^{-\frac{2\gamma_{\text{th}}}{%
\bar{\gamma}}}\right) .  \label{OutTS3}
\end{equation}
In fact, the expression in (\ref{OutTS3}) was expected since it is clear
from (\ref{TS1}) that for the time-sharing scheme, $\delta=1-\mu$.

\subsubsection{Outage Probability of the Threshold-Checking Scheme}

Considering the relay selection policy in the threshold-checking scheme in (%
\ref{TB1}), for calculating the outage probability we consider the following
cases:

\begin{itemize}
\item If $\tau\leq\gamma_{\text{th}}$, then the outage performance is
determined by $\gamma_{\kappa}$, hence an outage occurs if $%
\gamma_{\kappa}<\gamma_{\text{th}}$.

\item If $\tau>\gamma_{\text{th}}$, then an outage occurs if $%
\gamma_{\lambda}<\gamma_{\text{th}}$. The outage event in this case is thus
equivalent to the event where the SNR of a randomly selected relay -- out of
the pool of $N$ relays -- is smaller than $\gamma_{\text{th}}$, while the
SNRs of the remaining $N-1$ relays are all smaller than $\tau$.
\end{itemize}

The overall outage probability is thus expressed as%
\begin{equation}
P_{out,TC}=\left\{
\begin{array}{c}
\left( 1-e^{-\frac{2\gamma_{\text{th}}}{\bar{\gamma}}}\right) ^{N}\text{, \
\ if }\tau\leq\gamma_{\text{th}} \\
\left( 1-e^{-\frac{2\gamma_{\text{th}}}{\bar{\gamma}}}\right) \left( 1-e^{-%
\frac{2\tau}{\bar{\gamma}}}\right) ^{N-1}\text{, \ \ if }\tau >\gamma_{\text{%
th}}%
\end{array}
.\right.  \label{OutTB1}
\end{equation}
By replacing $\tau$ in (\ref{OutTB1}) with the right-hand side of (\ref{tau}%
) we can express the outage probability of the threshold-checking scheme as
a function of $\epsilon_{TC}$ and $\bar{\varepsilon}$,%
\begin{equation}
P_{out,TC}=\left\{
\begin{array}{c}
\left( 1-e^{-\frac{2\gamma_{\text{th}}}{\bar{\gamma}}}\right) ^{N}\text{, \
\ if }\frac{\epsilon_{TC}}{\bar{\varepsilon}}\leq1+\left( H_{N}-1\right)
\left( 1-e^{-\frac{2\gamma_{\text{th}}}{\bar{\gamma}}}\right) ^{N} \\
\left( 1-e^{-\frac{2\gamma_{\text{th}}}{\bar{\gamma}}}\right) \left( \frac {%
\frac{\epsilon_{TC}}{\bar{\varepsilon}}-1}{H_{N}-1}\right) ^{\frac{N-1}{N}}%
\text{, \ \ if }\frac{\epsilon_{TC}}{\bar{\varepsilon}}>1+\left(
H_{N}-1\right) \left( 1-e^{-\frac{2\gamma_{\text{th}}}{\bar{\gamma}}}\right)
^{N}%
\end{array}
.\right.  \label{OutTB0}
\end{equation}
Alternatively, $P_{out,TC}$ is expressed as a function of the tradeoff
factor as%
\begin{equation}
P_{out,TC}=\left\{
\begin{array}{c}
\left( 1-e^{-\frac{2\gamma_{\text{th}}}{\bar{\gamma}}}\right) ^{N}\text{, \
\ if }\delta\leq\left( 1-e^{-\frac{2\gamma_{\text{th}}}{\bar{\gamma}}%
}\right) ^{N} \\
\left( 1-e^{-\frac{2\gamma_{\text{th}}}{\bar{\gamma}}}\right) \delta^{\frac{%
N-1}{N}}\text{, \ \ if }\delta>\left( 1-e^{-\frac{2\gamma_{\text{th}}}{\bar{%
\gamma}}}\right) ^{N}%
\end{array}
.\right.  \label{OutTB2}
\end{equation}

\subsubsection{Outage Probability of the Weighted Difference Scheme}

\begin{proposition}
\label{Theo2}The outage probability of the weighted difference scheme is
given by
\begin{equation}
P_{out,WD}=\frac{e^{-\frac{4\gamma_{\text{th}}}{\bar{\gamma}}}\left( e^{%
\frac {2\gamma_{\text{th}}}{\bar{\gamma}}}-1\right) ^{2}+\left( 1-\sqrt{%
\frac {\bar{\varepsilon}}{3\bar{\varepsilon}-2\epsilon_{WD}}}\right) ^{2}%
\left[ e^{\frac{2\gamma_{\text{th}}}{\bar{\gamma}}}\left( 2-e^{\frac{%
2\gamma_{\text{th}}}{\bar{\gamma}\left( 1-\sqrt{\frac{\bar{\varepsilon}}{3%
\bar{\varepsilon }-2\epsilon_{WD}}}\right) }}\right) -1\right] }{1-\left( 1-%
\sqrt {\frac{\bar{\varepsilon}}{3\bar{\varepsilon}-2\epsilon_{WD}}}\right)
^{2}}  \label{OutWD0}
\end{equation}
or, in terms of the tradeoff factor, as%
\begin{equation}
P_{out,WD}=\frac{e^{-\frac{4\gamma_{\text{th}}}{\bar{\gamma}}}\left( e^{%
\frac {2\gamma_{\text{th}}}{\bar{\gamma}}}-1\right) ^{2}+\left( 1-\sqrt{%
\frac{1}{1-\delta}}\right) ^{2}\left[ e^{\frac{2\gamma_{\text{th}}}{\bar{%
\gamma}}}\left( 2-e^{\frac{2\gamma_{\text{th}}}{\bar{\gamma}\left( 1-\sqrt{%
\frac{1}{1-\delta}}\right) }}\right) -1\right] }{1-\left( 1-\sqrt{\frac{1}{%
1-\delta}}\right) ^{2}}.  \label{OutWD02}
\end{equation}
\end{proposition}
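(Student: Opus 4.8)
The plan is to compute $P_{out,WD}=\Pr\{\gamma_{s}<\gamma_{\text{th}}\}$ directly from the selection rule (\ref{WD}) and then to trade the free parameter $\nu$ for $\epsilon_{WD}$. First I would use the symmetry of the model under the exchange of the two relay indices: since $(\gamma_{1},\varepsilon_{1})$ and $(\gamma_{2},\varepsilon_{2})$ are i.i.d.\ and (\ref{WD}) is invariant under swapping relay $1$ and relay $2$, it follows that $P_{out,WD}=2\Pr\{s=1,\ \gamma_{1}<\gamma_{\text{th}}\}$. Because $s=1$ is equivalent to $\gamma_{1}-\gamma_{2}>\nu(\varepsilon_{2}-\varepsilon_{1})$, I would condition on $\gamma_{1}=x$ and $\gamma_{2}=y$ and integrate out $\varepsilon_{1},\varepsilon_{2}$: the difference of two i.i.d.\ exponential RVs with mean $\bar{\varepsilon}$ is Laplace distributed, so $\Pr\{s=1\,|\,\gamma_{1}=x,\gamma_{2}=y\}$ is the two-piece exponential $1-\tfrac{1}{2}e^{-(x-y)/(\nu\bar{\varepsilon})}$ for $x\geq y$ and $\tfrac{1}{2}e^{-(y-x)/(\nu\bar{\varepsilon})}$ for $x<y$.

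I would then carry out the two remaining one-dimensional integrations, both elementary. Averaging the conditional probability over $\gamma_{2}$ with pdf $f_{\gamma_{2}}$ (splitting the integral at $y=x$) gives a closed form $G(x)$ that is a linear combination of $1$, $e^{-2x/\bar{\gamma}}$ and $e^{-x/(\nu\bar{\varepsilon})}$. Integrating $f_{\gamma_{1}}(x)\,G(x)$ over $x\in(0,\gamma_{\text{th}})$ then yields $\Pr\{s=1,\gamma_{1}<\gamma_{\text{th}}\}$, and hence $P_{out,WD}$, as an explicit expression in $\gamma_{\text{th}},\bar{\gamma},\bar{\varepsilon},\nu$ whose only exponentials are $e^{-2\gamma_{\text{th}}/\bar{\gamma}}$, $e^{-4\gamma_{\text{th}}/\bar{\gamma}}$ and $e^{-(2/\bar{\gamma}+1/(\nu\bar{\varepsilon}))\gamma_{\text{th}}}$.

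The last step is to express $\nu$ through $\epsilon_{WD}$. For this I would reuse (and, if needed, re-derive) the $\nu$--$\epsilon_{WD}$ relationship already obtained in the proof of Proposition \ref{Prop1} in Appendix \ref{PrProp}: computing $\epsilon_{WD}=E\{\varepsilon_{s}\}=2E\{\varepsilon_{1}\mathbf{1}[s=1]\}$ by the same conditioning shows $3\bar{\varepsilon}-2\epsilon_{WD}=\bar{\varepsilon}\,(1+2\nu\bar{\varepsilon}/\bar{\gamma})^{-2}$, i.e.\ $1+2\nu\bar{\varepsilon}/\bar{\gamma}=\sqrt{\bar{\varepsilon}/(3\bar{\varepsilon}-2\epsilon_{WD})}$. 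Substituting this into the expression from the previous step turns the exponent $(2/\bar{\gamma}+1/(\nu\bar{\varepsilon}))\gamma_{\text{th}}$ into $(2/\bar{\gamma})\gamma_{\text{th}}+2\gamma_{\text{th}}/[\bar{\gamma}(1-\sqrt{\bar{\varepsilon}/(3\bar{\varepsilon}-2\epsilon_{WD})})]$ and the rational coefficients into functions of $(1-\sqrt{\bar{\varepsilon}/(3\bar{\varepsilon}-2\epsilon_{WD})})^{2}$, which after simplification gives (\ref{OutWD0}). The alternative form (\ref{OutWD02}) is then immediate from (\ref{delta}) with $N=2$, since there $H_{N}-1=\tfrac{1}{2}$ and hence $\sqrt{\bar{\varepsilon}/(3\bar{\varepsilon}-2\epsilon_{WD})}=\sqrt{1/(1-\delta)}$.

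I expect the main obstacle to be entirely computational rather than conceptual: carefully handling the region split at $y=x$ (and the accompanying sign of $x-y$ in the Laplace CDF), correctly collecting the several exponential terms, and then massaging the rational-in-$\nu$ coefficients into the compact form involving $\bigl(1-\sqrt{\bar{\varepsilon}/(3\bar{\varepsilon}-2\epsilon_{WD})}\bigr)^{2}$, while keeping the imported $\nu$--$\epsilon_{WD}$ relation consistent throughout. As sanity checks I would verify that $\gamma_{\text{th}}=0$ gives $P_{out,WD}=0$, that $\nu\to 0$ recovers $\Pr\{\gamma_{\kappa}<\gamma_{\text{th}}\}=(1-e^{-2\gamma_{\text{th}}/\bar{\gamma}})^{2}$ (selection driven by data only), and that $\nu\to\infty$ recovers $\Pr\{\gamma_{\lambda}<\gamma_{\text{th}}\}=1-e^{-2\gamma_{\text{th}}/\bar{\gamma}}$ for $N=2$ (selection driven by energy only).
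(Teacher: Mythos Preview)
Your proposal is correct and will lead to (\ref{OutWD0})--(\ref{OutWD02}); the overall strategy---compute $P_{out,WD}$ as a closed-form function of $\nu$ and then substitute the $\nu\leftrightarrow\epsilon_{WD}$ relation from Appendix~\ref{PrProp}---coincides with the paper's, but your organization of the first step is cleaner than what Appendix~\ref{AppOut} actually does. The paper does not exploit symmetry upfront nor the Laplace structure of $\varepsilon_{2}-\varepsilon_{1}$; instead it reproduces the six-term case decomposition of Appendix~\ref{PrProp} (Cases 1, 2a, 2b, 3a, 3b, 4) as six quadruple integrals $\mathcal{K}_{1},\ldots,\mathcal{K}_{4}$ over $(\gamma_{1},\gamma_{2},\varepsilon_{1},\varepsilon_{2})$, evaluates each, and only then notes the pairwise equalities $\mathcal{K}_{1}=\mathcal{K}_{4}$, $\mathcal{K}_{2}=\mathcal{K}_{3}$ to obtain the intermediate expression (\ref{OutWD2}) in $\nu$. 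Your route replaces that brute-force bookkeeping by two observations---(i) $P_{out,WD}=2\Pr\{s=1,\gamma_{1}<\gamma_{\text{th}}\}$ by exchangeability, and (ii) the $\varepsilon$-integration is just the Laplace CDF---which reduces the computation to two one-dimensional integrals and makes the region split at $y=x$ the only case distinction. The payoff is a shorter and more transparent derivation; the paper's case-by-case approach, on the other hand, keeps complete parallelism with the capacity calculation in Appendix~\ref{PrProp}, which may aid cross-checking. Both routes land on the same $\nu$-form (\ref{OutWD2}) before the final substitution of (\ref{nu}) and (\ref{delta}).
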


\begin{proof}
The proof is provided in Appendix \ref{AppOut}.
\end{proof}

%\begin{theorem}
%\label{Theo3}The weighted difference approach is optimal in the minimum
%outage probability sense. That is, it achieves the minimum outage
%probability for a given energy transfer to $\mathcal{H}$.
%\end{theorem}
%\begin{proof}
%The proof follows exactly the same steps as that of Theorem \ref{Theo} shown
%in Appendix \ref{ProofTheo}, by replacing $\frac{1}{2}\log _{2}\left(
%1+\gamma _{1,m}\right) $ and $\frac{1}{2}\log _{2}\left( 1+\gamma
%_{2,m}\right) $ in (\ref{Gen1}) by the probability that no outage occurs
%when $s=1$ and $s=2$, i.e., $\exp \left( -2\gamma _{th}\right) $,
%respectively.
%\end{proof}

\subsection{Asymptotic Analysis}

By taking the Taylor series expansion and keeping only the first order
terms, (\ref{OutTS3}), (\ref{OutTB2}), and (\ref{OutWD02}) reduce after some
algebraic manipulations to the following high-SNR expressions%
\begin{eqnarray}
P_{out,TS} &\approx &\frac{2\gamma _{\text{th}}}{\bar{\gamma}}\delta
\label{ASTS} \\
P_{out,TC} &\approx &\frac{2\gamma _{\text{th}}}{\bar{\gamma}}\delta ^{\frac{%
N-1}{N}}  \label{ASTB} \\
P_{out,WD} &\approx &\frac{2\gamma _{\text{th}}}{\bar{\gamma}}\left( 1-\sqrt{%
1-\delta }\right) .  \label{ASWD}
\end{eqnarray}%
Using the fact that $1-\sqrt{1-\delta }<\delta <\sqrt{\delta }$ for $%
0<\delta <1$, we observe from (\ref{ASTS})-(\ref{ASWD}) that for $N=2$ and
for any tradeoff factor the best asymptotic outage performance is achieved
by the weighted difference scheme, the time-sharing scheme performs in the
middle of the other two, and the worst asymptotic outage performance is
achieved by the threshold-checking scheme. The same result holds also for $%
N>2$, as $\delta <\delta ^{\frac{N-1}{N}}$ for any $0<\delta <1$. As will be
shown via numerical examples in Section \ref{NUM}, the outcome of this
comparison is different from that in terms of the ergodic capacity, as for
the latter comparison the threshold-checking scheme outperforms the
time-sharing scheme.

Using (\ref{ASTS})-(\ref{ASWD}), the diversity gain, $\mathcal{G}_{d}$, and
array gain, $\mathcal{G}_{a}$, can be straightforwardly derived by
expressing the asymptotic outage probability in the form $P_{out}=\left(
\mathcal{G}_{a}\bar{\gamma}/\gamma_{\text{th}}\right) ^{-\mathcal{G}_{d}}$
\cite{J:Giannakis_Param_Perf}. The results are summarized in the ensuing two
Corollaries.

\begin{corollary}
\label{AS1}The diversity order of the time-sharing, threshold-checking, and
weighted difference schemes equals one, unless a zero tradeoff factor is
employed. In other words, if the required energy transfer to $\mathcal{H}$
pertaining to the above mentioned selection schemes is larger (even by an
infinitesimally small amount) than its lower boundary, $\epsilon_{\min}$,
then the diversity gain is lost.
\end{corollary}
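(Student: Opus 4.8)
The plan is to prove Corollary \ref{AS1} directly from the first-order asymptotic expansions already established in (\ref{ASTS})--(\ref{ASWD}), so the argument reduces to reading off the exponent of $\bar\gamma$ in each expression. Recall that the diversity order $\mathcal{G}_d$ is by definition the quantity such that $P_{out} \sim (\mathcal{G}_a\, \bar\gamma/\gamma_{\text{th}})^{-\mathcal{G}_d}$ as $\bar\gamma\to\infty$, equivalently $\mathcal{G}_d = -\lim_{\bar\gamma\to\infty} \log P_{out}/\log(\bar\gamma/\gamma_{\text{th}})$. First I would treat the case of a strictly positive tradeoff factor $\delta>0$. For the time-sharing scheme, (\ref{ASTS}) gives $P_{out,TS}\approx (2\gamma_{\text{th}}/\bar\gamma)\,\delta$, which is $\Theta(\bar\gamma^{-1})$ since $\delta$ is a fixed constant independent of $\bar\gamma$; hence $\mathcal{G}_d=1$. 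The same reasoning applies verbatim to (\ref{ASTB}), where the constant prefactor is $\delta^{(N-1)/N}$ (still a fixed positive constant for $\delta>0$), and to (\ref{ASWD}), where the prefactor is $1-\sqrt{1-\delta}$, which is a fixed positive constant precisely because $\delta>0$. In all three cases the power of $\bar\gamma$ is exactly $-1$, so the diversity order is one.

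Next I would address the boundary case $\delta=0$, i.e.\ $\epsilon=\epsilon_{\min}=\bar\varepsilon$, to justify the second sentence of the statement. Here the first-order expansions degenerate: setting $\delta=0$ in (\ref{ASTS})--(\ref{ASWD}) gives a vanishing leading term, so one must retain higher-order terms. The cleanest route is to return to the exact outage expressions rather than the truncated asymptotics. For $\delta=0$ the time-sharing scheme reduces to always selecting $\mathcal{R}_\kappa$ (since $\delta=1-\mu$ forces $\mu=1$), and its exact outage probability from (\ref{OutTS1}) is $(1-e^{-2\gamma_{\text{th}}/\bar\gamma})^N$, which behaves as $(2\gamma_{\text{th}}/\bar\gamma)^N$ for large $\bar\gamma$ — diversity order $N$. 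Likewise, at $\delta=0$ the threshold-checking scheme has $\tau\le\gamma_{\text{th}}$ (from the first branch of (\ref{OutTB2})), so $P_{out,TC}=(1-e^{-2\gamma_{\text{th}}/\bar\gamma})^N \sim (2\gamma_{\text{th}}/\bar\gamma)^N$, again order $N$; and the weighted difference scheme at $\delta=0$ (equivalently $\epsilon_{WD}=\bar\varepsilon$, so the radical equals $1$) makes the $\delta$-dependent bracket in (\ref{OutWD0}) vanish, leaving $P_{out,WD}=e^{-4\gamma_{\text{th}}/\bar\gamma}(e^{2\gamma_{\text{th}}/\bar\gamma}-1)^2 = (1-e^{-2\gamma_{\text{th}}/\bar\gamma})^2 \sim (2\gamma_{\text{th}}/\bar\gamma)^2$, i.e.\ order $N=2$. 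This confirms that full diversity $N$ is recovered only at the lower energy boundary and is otherwise lost as soon as $\delta>0$.

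The only mild subtlety — and the main thing to be careful about — is the validity of using the first-order expansions: one must check that for fixed $\delta>0$ the omitted higher-order terms in (\ref{ASTS})--(\ref{ASWD}) are genuinely $o(\bar\gamma^{-1})$ and do not conspire to change the exponent. This is immediate because each exact outage formula is an analytic function of $e^{-2\gamma_{\text{th}}/\bar\gamma}$ (and of $e^{-2\tau/\bar\gamma}$, with $\tau$ fixed by $\delta$), so its Taylor expansion in the small parameter $\gamma_{\text{th}}/\bar\gamma$ has the displayed linear term as its true leading term whenever that term's coefficient is nonzero; since for $\delta>0$ those coefficients $\delta$, $\delta^{(N-1)/N}$, $1-\sqrt{1-\delta}$ are all strictly positive, no cancellation occurs. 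Finally, reading off $\mathcal{G}_a$ from $P_{out}=(\mathcal{G}_a\bar\gamma/\gamma_{\text{th}})^{-\mathcal{G}_d}$ with $\mathcal{G}_d=1$ gives $\mathcal{G}_a=1/(2\delta)$, $1/(2\delta^{(N-1)/N})$, and $1/(2(1-\sqrt{1-\delta}))$ for the three schemes respectively, which feeds directly into the companion Corollary on array gains.
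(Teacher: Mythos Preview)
Your proposal is correct and follows precisely the paper's own approach: the paper simply states that the diversity and array gains are ``straightforwardly derived'' by casting (\ref{ASTS})--(\ref{ASWD}) in the form $P_{out}=(\mathcal{G}_a\bar\gamma/\gamma_{\text{th}})^{-\mathcal{G}_d}$, without further argument. Your treatment is actually more thorough, since you explicitly verify the $\delta=0$ boundary case from the exact expressions and justify why the first-order terms genuinely dominate for $\delta>0$.
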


\begin{corollary}
\label{AS2}The array gain of the time-sharing, threshold-checking and
weighted difference scheme equal respectively%
\begin{eqnarray}
\mathcal{G}_{a,TS} &=&\frac{1}{2\delta }  \label{AGTS} \\
\mathcal{G}_{a,TC} &=&\frac{1}{2\delta ^{\frac{N-1}{N}}}  \label{AGTB} \\
\mathcal{G}_{a,WD} &=&\frac{1}{2\left( 1-\sqrt{1-\delta }\right) }.
\label{AGWD}
\end{eqnarray}
\end{corollary}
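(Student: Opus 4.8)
The plan is to read the array gains directly off the high-SNR expressions (\ref{ASTS})--(\ref{ASWD}) once the diversity order is fixed. By Corollary \ref{AS1}, for any nonzero tradeoff factor each of the three schemes has diversity order $\mathcal{G}_{d}=1$. Substituting $\mathcal{G}_{d}=1$ into the canonical parametrization $P_{out}=\left(\mathcal{G}_{a}\bar{\gamma}/\gamma_{\text{th}}\right)^{-\mathcal{G}_{d}}$ of \cite{J:Giannakis_Param_Perf}, the leading-order outage probability collapses to $P_{out}=\gamma_{\text{th}}/(\mathcal{G}_{a}\bar{\gamma})$. Hence $\mathcal{G}_{a}$ is obtained simply as the reciprocal of the coefficient that multiplies $\gamma_{\text{th}}/\bar{\gamma}$ in each asymptotic expression.

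Carrying this out, (\ref{ASTS}) reads $P_{out,TS}\approx(2\delta)\,\gamma_{\text{th}}/\bar{\gamma}=\gamma_{\text{th}}\big/\big((1/(2\delta))\,\bar{\gamma}\big)$, so $\mathcal{G}_{a,TS}=1/(2\delta)$, which is (\ref{AGTS}). The identical matching applied to (\ref{ASTB}) gives $\mathcal{G}_{a,TC}=1/(2\delta^{(N-1)/N})$, i.e.\ (\ref{AGTB}), and applied to (\ref{ASWD}) gives $\mathcal{G}_{a,WD}=1\big/\big(2(1-\sqrt{1-\delta})\big)$, i.e.\ (\ref{AGWD}). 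No additional computation is required, since the first-order Taylor expansions producing (\ref{ASTS})--(\ref{ASWD}) were already carried out in the derivation of those equations.

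There is no genuinely hard step; the only point demanding care is consistency with Corollary \ref{AS1}. The first-order terms in (\ref{ASTS})--(\ref{ASWD}) are the true leading-order terms precisely when their coefficients do not vanish, which holds exactly for $\delta\in(0,1]$. For $\delta=0$ the coefficients in the threshold-checking and weighted-difference cases become zero and higher-order terms dominate (recovering the $N$-th order diversity of pure max-SNR selection), so, as already noted in Corollary \ref{AS1}, the array-gain expressions (\ref{AGTS})--(\ref{AGWD}) are to be understood under a strictly positive tradeoff factor.
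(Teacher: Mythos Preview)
Your proof is correct and follows exactly the paper's approach: the paper states that the diversity and array gains are ``straightforwardly derived'' by casting (\ref{ASTS})--(\ref{ASWD}) into the canonical form $P_{out}=\left(\mathcal{G}_{a}\bar{\gamma}/\gamma_{\text{th}}\right)^{-\mathcal{G}_{d}}$, which is precisely what you do. The only minor slip is in your closing remark---at $\delta=0$ the time-sharing coefficient $2\delta$ vanishes as well, not just the threshold-checking and weighted-difference coefficients---but this does not affect the argument.
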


\section{\label{PAR}Pareto Efficiency for $N=2$}

The idea behind the weighted difference scheme is general enough so that by
a careful amendment of the quantities in the left hand side of (\ref{WD}),
we can choose to optimize any long-term performance metric associated with
information transmission to $\mathcal{D}$, for a given energy transfer to $%
\mathcal{H}$ and $N=2$. This interesting conclusion is summarized in the
ensuing Theorem.

\begin{theorem}
\label{Theo}Let $\mathcal{F}$ denote any metric that characterizes the
long-term performance of the information transmission to $\mathcal{D}$, in
the sense that the performance is optimized when $\mathcal{F}$ is maximized.
Let $\mathcal{F}\left( \gamma_{i}\right) $ be a non-decreasing function of $%
\gamma_{i}$, which describes the instantaneous realization of $\mathcal{F}$
associated with the use of the $\mathcal{S}$-$\mathcal{R}_{i}$-$\mathcal{D}$
link. The Pareto frontier of the tradeoff between $\mathcal{F}$ and the
average energy transfer is achieved by the following selection policy%
\begin{equation}
\mathcal{F}\left( \gamma_{1}\right) -\mathcal{F}\left( \gamma_{2}\right)
\begin{array}{c}
\overset{s=1}{>} \\
\underset{s=2}{<}%
\end{array}
\zeta\left( \varepsilon_{2}-\varepsilon_{1}\right)  \label{PO}
\end{equation}
where $\zeta>0$ is a constant, in which the tradeoff factor is reflected.
\end{theorem}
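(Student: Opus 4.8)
The plan is to argue by a contradiction / exchange argument that the selection policy in \eqref{PO} is Pareto efficient, i.e.\ that no other (possibly randomized) selection rule can achieve a strictly larger long-term average of $\mathcal{F}$ for the same — or smaller — average energy transfer to $\mathcal{H}$. First I would set up the optimization problem precisely. A selection rule is a (measurable) map that, given the realization $(\gamma_1,\gamma_2,\varepsilon_1,\varepsilon_2)$, picks $s\in\{1,2\}$; equivalently it is characterized by the set $A\subseteq\mathbb{R}_+^4$ of channel states on which relay~$1$ is chosen. The long-term performance of information transmission is $\bar{\mathcal{F}}(A)=E\{\mathcal{F}(\gamma_1)\mathbf{1}_A+\mathcal{F}(\gamma_2)\mathbf{1}_{A^c}\}$ and the average energy transfer is $\bar\epsilon(A)=E\{\varepsilon_1\mathbf{1}_A+\varepsilon_2\mathbf{1}_{A^c}\}$. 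The Pareto frontier is traced by maximizing $\bar{\mathcal{F}}(A)$ subject to $\bar\epsilon(A)=\epsilon$ for each feasible $\epsilon\in[\epsilon_{\min},\epsilon_{\max}]$.

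The key step is a Lagrangian / pointwise-optimization argument. For a fixed multiplier $\zeta>0$, consider maximizing $\bar{\mathcal{F}}(A)+\zeta\,\bigl(E\{\varepsilon_1\mathbf{1}_A+\varepsilon_2\mathbf{1}_{A^c}\}\bigr)$ over all $A$ — equivalently, since only the sign of the contribution matters, maximizing $E\{(\mathcal{F}(\gamma_1)+\zeta\varepsilon_1)\mathbf{1}_A+(\mathcal{F}(\gamma_2)+\zeta\varepsilon_2)\mathbf{1}_{A^c}\}$. Because the integrand decomposes realization-by-realization, the maximizer assigns $s=1$ exactly when $\mathcal{F}(\gamma_1)+\zeta\varepsilon_1>\mathcal{F}(\gamma_2)+\zeta\varepsilon_2$, i.e.\ when $\mathcal{F}(\gamma_1)-\mathcal{F}(\gamma_2)>\zeta(\varepsilon_2-\varepsilon_1)$ — which is precisely \eqref{PO}. (Ties occur with probability zero since $\gamma_i$ and $\varepsilon_i$ are continuous RVs, so the rule on the boundary is immaterial.) Thus \eqref{PO} pointwise maximizes the weighted sum $\bar{\mathcal{F}}-(-\zeta)\bar\epsilon$ for every $\zeta>0$, i.e.\ it solves the scalarized problem for the weight $\zeta$.

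It then remains to convert "maximizes a positively-weighted scalarization for some $\zeta>0$" into "lies on the Pareto frontier," and to verify that varying $\zeta\in(0,\infty)$ sweeps out the whole frontier. The first part is a standard supporting-hyperplane fact: if $A^\star$ maximizes $\bar{\mathcal{F}}(A)+\zeta\bar\epsilon(A)$ and some other $A'$ had $\bar{\mathcal{F}}(A')\ge\bar{\mathcal{F}}(A^\star)$ with $\bar\epsilon(A')\ge\bar\epsilon(A^\star)$ and at least one inequality strict, then $\bar{\mathcal{F}}(A')+\zeta\bar\epsilon(A')>\bar{\mathcal{F}}(A^\star)+\zeta\bar\epsilon(A^\star)$ (using $\zeta>0$), contradicting optimality; hence $A^\star$ is Pareto efficient. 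For the sweep, I would note that $\bar\epsilon$ under rule \eqref{PO} is a continuous, monotone function of $\zeta$ (as $\zeta\to 0^+$ the rule maximizes $\mathcal{F}$ alone and gives $\epsilon\to\epsilon_{\min}=\bar\varepsilon$; as $\zeta\to\infty$ it maximizes energy alone and gives $\epsilon\to\epsilon_{\max}=H_N\bar\varepsilon=2\bar\varepsilon$), so by the intermediate value theorem every $\epsilon$ in the feasible interval is attained by some $\zeta$, and the corresponding $\bar{\mathcal{F}}$ is the frontier value there. I expect the main obstacle to be the bookkeeping that "weighted-difference" threshold rules are in fact the only optimizers — i.e.\ ruling out randomized or alternative deterministic rules — but this is handled by the continuity of the distributions (zero-probability ties) together with the pointwise nature of the scalarized objective, so no genuinely deep argument is needed; the monotonicity/continuity of $\bar\epsilon(\zeta)$ is the one technical point worth stating carefully.
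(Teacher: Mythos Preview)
Your argument is correct and follows essentially the same Lagrangian/scalarization route as the paper: relax the energy constraint with a multiplier $\zeta$, observe that the resulting objective decouples across realizations, and read off the pointwise rule \eqref{PO}. One minor slip: for $N=2$ the upper boundary is $\epsilon_{\max}=H_2\bar\varepsilon=\tfrac{3}{2}\bar\varepsilon$, not $2\bar\varepsilon$; this does not affect the argument.
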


\begin{proof}
The proof is provided in Appendix \ref{ProofTheo}.
\end{proof}

The utility parameter $\mathcal{F}$ in (\ref{PO}) can represent any of the
most common performance metrics whose maximization is associated with
optimizing system performance, such as the average SNR, the ergodic
capacity, the probability of no-outage, and the probability of correct bit
detection. If the average SNR is the metric of interest, then the weighted
difference scheme is Pareto efficient because (\ref{PO}) reduces to (\ref{WD}%
). Next, we investigate the tradeoffs pertaining to the optimal
ergodic capacity and the optimal probability of no-outage for a
given energy transfer. The analysis for the optimal probability of
correct bit detection follows similarly, and is omitted here for
brevity.

\subsection{Optimal Ergodic Capacity for a Given Energy Transfer}

Theorem \ref{Theo} provides the Pareto frontier of the tradeoff between
ergodic capacity and average transferred energy by substituting $\mathcal{F}%
\left( \gamma_{i}\right) $ with the instantaneous capacity expression, i.e.,
by setting
\begin{equation}
\mathcal{F}\left( \gamma_{1}\right) =\frac{1}{2}\log_{2}\left( 1+\gamma
_{1}\right) \text{, \ }\mathcal{F}\left( \gamma_{2}\right) =\frac{1}{2}%
\log_{2}\left( 1+\gamma_{2}\right)
\end{equation}
in (\ref{PO}).

\subsubsection*{\label{PARC}The Pareto Frontier of Ergodic Capacity Vs.
Energy Transfer}

Due to the complicated mathematical analysis involved, the exact derivation
of the Pareto frontier of the desired tradeoff is cumbersome. In fact, by
following a similar approach as that in Appendix \ref{PrProp} (the
parameters $\left( \gamma _{2}-\gamma _{1}\right) /\nu $ and $\left( \gamma
_{1}-\gamma _{2}\right) /\nu $ at the integral limits of the fourth integral
in $\mathcal{I}_{2a}$, $\mathcal{I}_{2b}$, $\mathcal{I}_{3a}$, $\mathcal{I}%
_{3b}$ in (\ref{eWD}) are substituted by $\frac{1}{2\zeta }\log _{2}\left(
1+\gamma _{2}\right) -\frac{1}{2\zeta }\log _{2}\left( 1+\gamma _{1}\right) $
and $\frac{1}{2\zeta }\log _{2}\left( 1+\gamma _{1}\right) -\frac{1}{2\zeta }%
\log _{2}\left( 1+\gamma _{2}\right) $, respectively), we can express $%
\epsilon $ as a function of $\zeta $, yet that expression involves a
two-fold integration which, to the best of our knowledge, is solvable only
by numerical methods. Thus, the resulting Pareto frontier curve can be
derived numerically only. Further discussions on this Pareto frontier are
provided in Section \ref{NUM}.

\subsection{Optimal Outage Probability for a Given Energy Transfer}

In case the outage probability is the metric of interest, then $\mathcal{F}%
\left( \gamma _{i}\right) $ in (\ref{PO}) is substituted by
\begin{equation}
\mathcal{F}\left( \gamma _{1}\right) =\left\{
\begin{array}{c}
1\text{, if }\gamma _{1}>\gamma _{\text{th}} \\
0\text{, if }\gamma _{1}<\gamma _{\text{th}}%
\end{array}%
\right. \text{, }\mathcal{F}\left( \gamma _{2}\right) =\left\{
\begin{array}{c}
1\text{, if }\gamma _{2}>\gamma _{\text{th}} \\
0\text{, if }\gamma _{2}<\gamma _{\text{th}}%
\end{array}%
\right. .  \label{POOut}
\end{equation}%
That is, for optimizing the outage probability for a given energy transfer
the function $\mathcal{F}\left( \gamma _{i}\right) $ in (\ref{PO}) reduces
to the binary event of no-outage, given the instantaneous realization of $%
\gamma _{i}$, $i=1,2$. The resulting tradeoff between the probability of
no-outage and the average energy transfer is Pareto efficient, and is
investigated below.

\subsubsection*{The Pareto Frontier of Probability of No-Outage Vs. Energy
Transfer}

By following a similar analysis as that in Appendix \ref{AppOut}, the
average transferred energy, $\epsilon _{out,opt}$, can be expressed as%
\begin{align}
\epsilon _{out,opt}& =\underset{\text{Case of }\gamma _{1}>\gamma _{th}\text{%
;}\gamma _{2}>\gamma _{\text{th}}}{~~\underbrace{\frac{3}{2}e^{-\frac{%
4\gamma _{\text{th}}}{\bar{\gamma}}}\bar{\varepsilon}}}+~2~\underset{\text{%
Cases of }\gamma _{1}<\gamma _{\text{th}}\text{;}\gamma _{2}>\gamma _{\text{%
th}}\text{ , }\gamma _{1}>\gamma _{\text{th}}\text{;}\gamma _{2}<\gamma _{%
\text{th}}}{\underbrace{\frac{e^{-\frac{4\gamma _{\text{th}}}{\bar{\gamma}}-%
\frac{1}{\zeta \bar{\varepsilon}}}\left( e^{\frac{2\gamma _{\text{th}}}{\bar{%
\gamma}}}-1\right) \left[ \zeta \bar{\varepsilon}\left( 2e^{\frac{1}{\zeta
\bar{\varepsilon}}}+1\right) +1\right] }{2\zeta }}}+\underset{\text{Case of }%
\gamma _{1}<\gamma _{\text{th}}\text{;}\gamma _{2}<\gamma _{\text{th}}}{%
\underbrace{\frac{3}{2}\left( 1-e^{-\frac{4\gamma _{\text{th}}}{\bar{\gamma}}%
}\right) ^{2}\bar{\varepsilon}}}  \notag \\
& =\frac{e^{-\frac{4\gamma _{\text{th}}}{\bar{\gamma}}}}{2}\left[ \bar{%
\varepsilon}\left( 2-2e^{\frac{2\gamma _{\text{th}}}{\bar{\gamma}}}+3e^{%
\frac{4\gamma _{\text{th}}}{\bar{\gamma}}}\right) +\frac{2e^{-\frac{1}{\zeta
\bar{\varepsilon}}}\left( \zeta \bar{\varepsilon}+1\right) \left( e^{\frac{%
2\gamma _{\text{th}}}{\bar{\gamma}}}-1\right) }{\zeta }\right] .
\label{eoutopt}
\end{align}%
An important observation that is made from (\ref{eoutopt}) is that the lower
boundary of the average transferred energy in this case is different from
the lower boundary shown in Lemma \ref{lem2}. In particular, by taking
limits in (\ref{eoutopt}) for $\zeta \rightarrow 0^{+}$, we obtain\footnote{%
In fact, $\epsilon _{out,opt,\min }$ experiences a discontinuity for $\zeta
=0$, as for this case it is clear from (\ref{PO}) that the scheme reduces to
ignoring the energy transfer when selecting the relay, resulting in $%
\epsilon _{out,opt,\min }=\bar{\varepsilon}$. This case, however, is
excluded from our analysis as it does not result in any tradeoff between
probability of no-outage and energy transfer.}%
\begin{equation}
\epsilon _{out,opt,\min }=\bar{\varepsilon}\left( \frac{3}{2}+e^{-\frac{%
4\gamma _{\text{th}}}{\bar{\gamma}}}-e^{-\frac{2\gamma _{\text{th}}}{\bar{%
\gamma}}}\right) .  \label{eoptmin}
\end{equation}%
This result reveals that the selection policy in (\ref{PO}) (in conjunction
with (\ref{POOut})) achieves the Pareto frontier for a limited range of
tradeoff factor. In particular, it follows from (\ref{delta}) and (\ref%
{eoptmin}) that the tradeoff factor in this case spans the interval%
\begin{equation}
\delta \in \left[ 1-2\left( e^{-\frac{2\gamma _{\text{th}}}{\bar{\gamma}}%
}-e^{-\frac{4\gamma _{\text{th}}}{\bar{\gamma}}}\right) ,1\right] .
\label{deltarange}
\end{equation}%
The limited range of the tradeoff factor can be explained by the fact that a
tradeoff in this case exists only for $\mathcal{F}\left( \gamma _{1}\right)
\not=\mathcal{F}\left( \gamma _{2}\right) $, since for the complementary
event of $\mathcal{F}\left( \gamma _{1}\right) =\mathcal{F}\left( \gamma
_{2}\right) $ the left-hand side of (\ref{PO}) equals zero, hence the relay
selection policy is independent of $\zeta $. In other words, no exchange
takes place between outage probability and energy transfer as long as $%
\mathcal{F}\left( \gamma _{1}\right) =\mathcal{F}\left( \gamma _{2}\right) $%
, an event which occurs with probability $1-2\left( e^{-\frac{2\gamma _{%
\text{th}}}{\bar{\gamma}}}-e^{-\frac{4\gamma _{\text{th}}}{\bar{\gamma}}%
}\right) $. The practical meaning of this observation is that the selection
policy in (\ref{PO}), (\ref{POOut}), achieves an energy transfer increase
from $\bar{\varepsilon}$ to $\epsilon _{out,opt,\min }$ with no outage cost.
This fact will be better explained through numerical examples in Section \ref%
{NUMOut}.

Since (\ref{eoutopt}) is not solvable with respect to $\zeta $, a
mathematical expression for the Pareto probability of no-outage as a
function of the average energy transfer is not possible. Hence, we confine
ourselves to obtaining an expression for the probability of no-outage as a
function of $\zeta $, as follows. The no-outage event occurs for the
following cases: $\gamma _{1}>\gamma _{\text{th}}$ and $\gamma _{2}>\gamma _{%
\text{th}}$; $\gamma _{1}>\gamma _{\text{th}}$ and $\gamma _{2}<\gamma _{%
\text{th}}$ and $\varepsilon _{2}<\varepsilon _{1}+1/\zeta $; $\gamma
_{1}<\gamma _{\text{th}}$ and $\gamma _{2}>\gamma _{\text{th}}$ and $%
\varepsilon _{1}<\varepsilon _{2}+1/\zeta $. The probability of the union of
these events can be evaluated by solving the corresponding integrals, which
are similar to the integrals in (\ref{eWD}), yielding%
\begin{equation}
P_{no-out,opt}=e^{-\frac{4\gamma _{\text{th}}}{\bar{\gamma}}}\left[ 2e^{%
\frac{2\gamma _{\text{th}}}{\bar{\gamma}}}-e^{-\frac{1}{\zeta \bar{%
\varepsilon}}}\left( e^{\frac{2\gamma _{\text{th}}}{\bar{\gamma}}}-1\right)
-1\right] .  \label{No_Out1}
\end{equation}

\section{\label{NUM}Numerical Examples}

This section presents a set of illustrative examples that provide insight
into the behavior of the ergodic capacity and the outage probability, for a
given required average energy transfer to $\mathcal{H}$.

\subsection{Ergodic Capacity vs. Energy Transfer}

Fig. \ref{Fig_Trad1} depicts the ergodic capacity vs. the average
energy
transferred to $\mathcal{H}$ (normalized with respect to $\bar{\varepsilon}$%
), for the case of two available relays $N=2$ and $\bar{\gamma}=20$ dB. The
performance of the three tradeoff schemes studied in Section \ref{TRAD} is
compared with the Pareto frontier developed in Section \ref{PAR}, for the
range of feasible transmitted energy values, i.e., for $\bar{\varepsilon }%
<\epsilon<H_{2}\bar{\varepsilon}=1.5\bar{\varepsilon}$. The curves
pertaining to the three schemes in Section \ref{TRAD} were obtained from (%
\ref{CTS2}), (\ref{CTB02}), and (\ref{WDTrad}). For the generation of the
Pareto frontier curve numerical methods were used for obtaining the value of
$\zeta$ that leads to certain average energy transfer, as described in
Section \ref{PARC}; this value of $\zeta$ was then used for obtaining
numerical values for the corresponding ergodic capacity. As expected, the
time-sharing scheme leads to a linear tradeoff curve (equal opportunity
cost) which lies below the tradeoff curves of the threshold-checking and
weighted difference scheme, since the two latter schemes achieve tradeoffs
with increasing opportunity costs. Moreover, it is observed that the
weighted difference scheme approaches the Pareto frontier in the region
close to the tradeoff boundaries, and that it clearly outperforms the
time-sharing and the threshold-checking schemes.

\begin{figure}[ptb]
\centering
\includegraphics[trim=.4cm .5cm .5cm .4cm, clip=true,
scale=1.3]{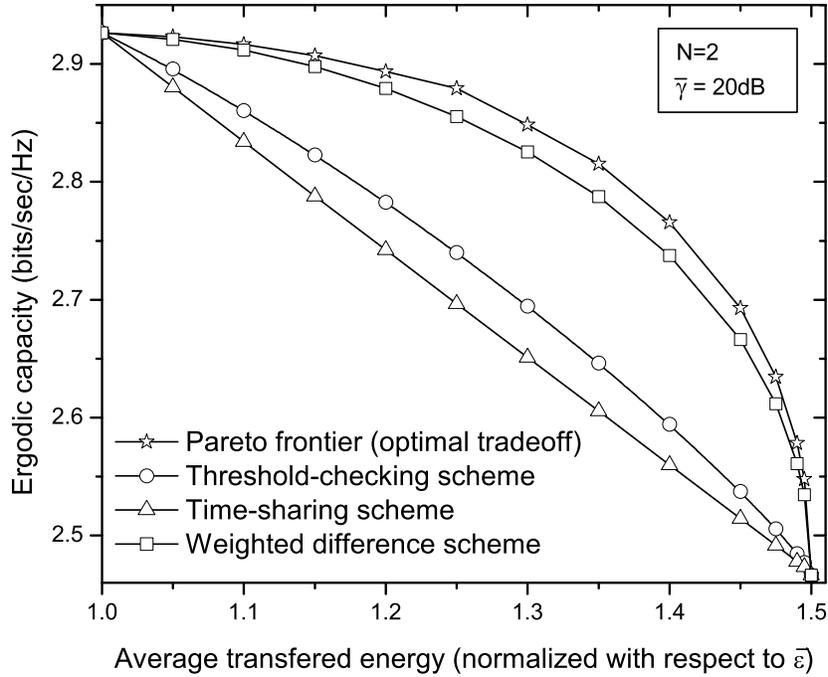} \caption{The capacity-energy
tradeoff of the schemes under consideration, for
$\bar{\protect\gamma}=20$ dB.} \label{Fig_Trad1}
\end{figure}

Similar observations are obtained from Fig. \ref{Fig_Trad2}, where
the same
tradeoffs are now plotted vs. the tradeoff factor, $\delta $, for $N=2$ and $%
\bar{\gamma}=10$ dB. We note that the weighted difference curve approximates
the Pareto frontier for low $\bar{\gamma}$, for the entire range of $\delta $%
. Moreover, the steepness of the curve in the region close to the boundaries
reveals that a large gain in ergodic capacity (transferred energy) is
attained without much sacrifice in energy transfer (ergodic capacity), for $%
\delta $ approaching zero or one. Furthermore, Fig. \ref{Fig_Trad2} shows
that the theoretically derived tradeoff results pertaining to the schemes
considered in Section \ref{TRAD} are in agreement with simulations.

\begin{figure}[ptb]
\centering
\includegraphics[trim=.4cm .5cm .5cm .4cm, clip=true,
scale=1.3]{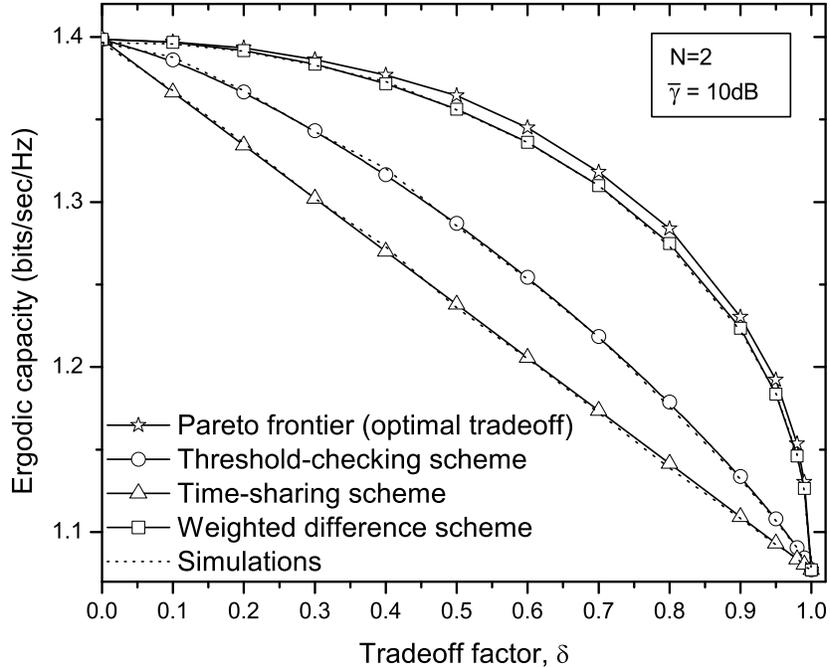} \caption{The capacity-energy
tradeoff of the schemes under consideration plotted vs. the
tradeoff factor, for $\bar{\protect\gamma }=10$ dB.}
\label{Fig_Trad2}
\end{figure}

The ergodic capacity vs. the average SNR per link, $\bar{\gamma}$,
for several values of $\delta $, is illustrated in Fig.
\ref{Fig_CapG}. We observe that by increasing the tradeoff factor
from $\delta =0$ to $\delta =1 $, a capacity decrease occurs,
which corresponds to an SNR loss of approximately $3$ dB. That is,
the cost in terms of capacity for increasing the wireless energy
transfer to $\mathcal{H}$ from its minimum to its maximum possible
value is approximately $3$ dB. This capacity cost is reduced if a
tradeoff factor smaller than one is selected. Moreover, we note
that the capacity of the weighted difference scheme approximates
that of the Pareto efficient scheme in (\ref{PO}); particularly
for low SNRs, the weighted difference scheme is almost Pareto
efficient.

\subsection{\label{NUMOut}Outage and No-Outage Probability vs. Energy
Transfer}

Fig. \ref{Fig_TradOut} illustrates the tradeoff between the probability of
no-outage and the average energy transfer, for $N=2$ and $\bar{\gamma}%
=2\gamma_{\text{th}}/\ln\left( 2\right) $. This particular choice for $\bar{%
\gamma }$ was made for convenience of presentation, since it follows from (%
\ref{delta}) that this choice of $\bar{\gamma}$ maximizes the range of the
feasible tradeoff factor for the Pareto efficient scheme, yielding $\delta
\in\left[ 0.5,1\right] $. The main observations drawn from Fig. \ref%
{Fig_TradOut} are the following: a) The weighted difference scheme
outperforms the threshold-checking and the time-sharing scheme, except for
small values of $\delta$. b) The threshold-checking scheme achieves Pareto
efficiency for small $\delta$, yet its performance is degraded for large $%
\delta$, where it approaches the performance of the time-sharing scheme. c)
The Pareto efficient scheme in (\ref{PO}) achieves the maximum feasible
probability of no-outage at its lower boundary (i.e., for $\delta=0.5$ for
the case of $\bar{\gamma}=2\gamma_{\text{th}}/\ln\left( 2\right) $). The
interpretation of this observation is as follows. It is trivial to prove
that by using $\zeta=0$ in (\ref{PO}) the optimum outage performance is
achieved, since the relay selection is based solely on the ability to
achieve an overall SNR larger than $\gamma_{\text{th}}$; the energy transfer
equals $\epsilon=\bar {\varepsilon}$. However, observation c) reveals that
by increasing $\zeta$ in (\ref{PO}) by an infinitesimally small amount, we
can increase the transferred energy from $\bar{\varepsilon}$ to $\bar{%
\varepsilon}\left( \frac{3}{2}+e^{-\frac{4\gamma_{\text{th}}}{\bar{\gamma}}%
}-e^{-\frac{2\gamma_{\text{th}}}{\bar {\gamma}}}\right) $, as (\ref{eoptmin}%
) suggests. In other words, we can offer more energy transfer to $\mathcal{H}
$ with no outage cost. In fact, this phenomenon stems from the on-off nature
of the outage events.

\begin{figure}[ptb]
\centering
\includegraphics[trim=.4cm .5cm .5cm 1.cm, clip=true,
scale=1.3]{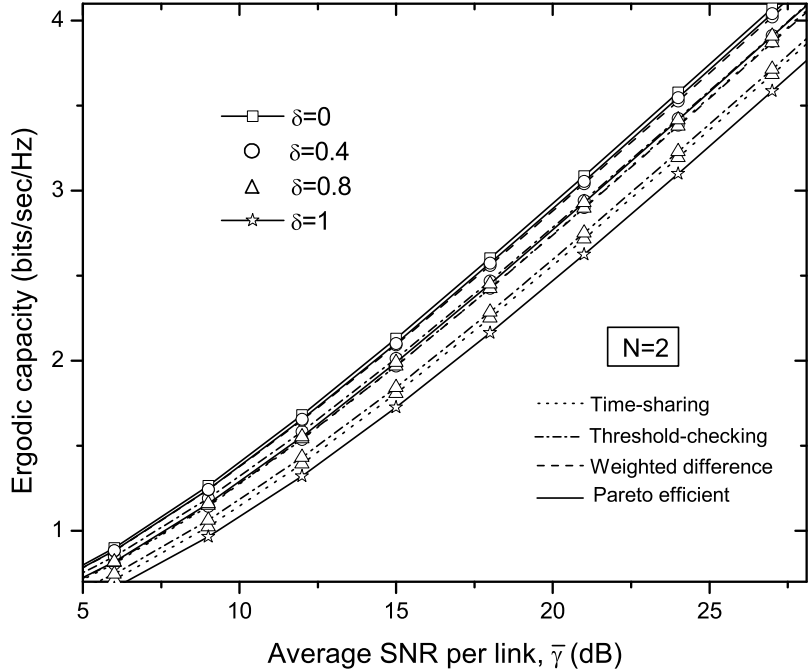} \caption{Ergodic capacity vs.
average SNR per link, $\bar{\protect\gamma}$, for $N=2$ and
several values of the tradeoff factor, $\protect\delta$.}
\label{Fig_CapG}
\end{figure}

The outage probability of the schemes under consideration are
depicted in Fig. \ref{Fig_OutG}, for $N=2$ and some values of
$\delta$ (for the Pareto
efficient scheme, if $\delta$ is smaller than its lower bound in (\ref{delta}%
) then this lower bound was used since the energy transfer to $\mathcal{H}$
satisfies the required energy transfer that $\delta$ implies). The main
conclusion drawn from Fig. \ref{Fig_OutG} is that, as suggested by Corollary %
\ref{AS1}, the slope of the outage curves of the time-sharing, the
threshold-checking, and the weighted difference schemes is negative unity
(in a log-log scale) for $\delta\not =0$, implying unit diversity order.
However, the slope of the Pareto efficient scheme demonstrates full
diversity order of $\mathcal{G}_{d}=2$, unless the maximum tradeoff factor
is allocated ($\delta=1$). This result was also expected since for $\delta=1$
the relay selection is made based solely on the strength of the $\mathcal{R}%
_{i}$-$\mathcal{H}$ channels. Moreover, we observe that the performance of
the threshold-checking scheme is inferior to all its counterparts in the
medium and high $\bar{\gamma}/\gamma_{\text{th}}$ region (equivalently, the
medium and low outage probability region), a fact which corroborates
Corollary \ref{AS2}. This behavior is in contrast to that in the low $\bar{%
\gamma}/\gamma_{\text{th}}$ region (i.e., for $\bar{\gamma}/\gamma_{\text{th}%
}<5$ dB), where the threshold-checking scheme outperforms the time-sharing
and the weighted-difference scheme, and actually approaches the behavior of
the Pareto efficient scheme.

\begin{figure}[ptb]
\centering
\includegraphics[trim=.4cm .5cm .5cm .4cm, clip=true,
scale=1.3]{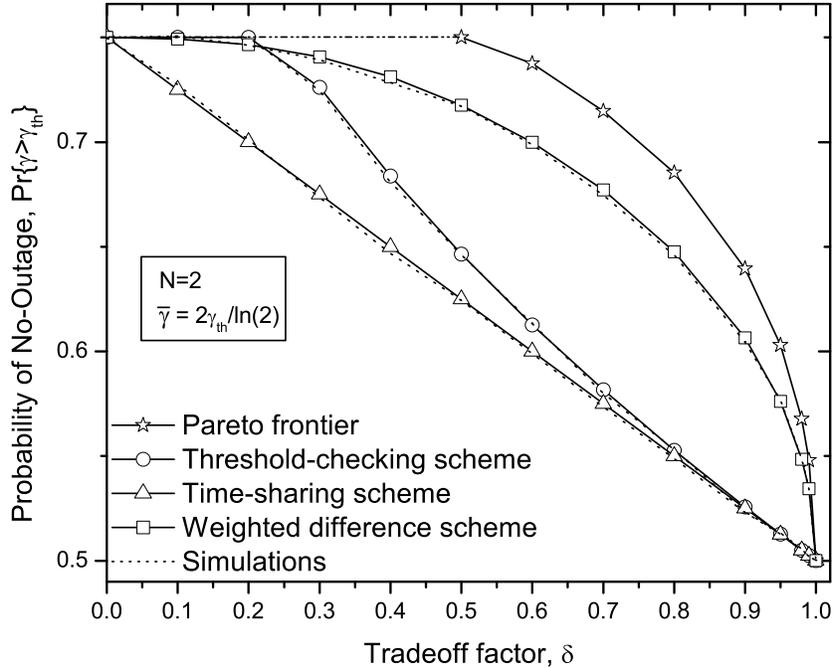} \caption{The tradeoff between the
probability of no-outage and the average
energy transfer vs. the tradeoff factor, for $N=2$ and $\bar{\protect\gamma}%
=2\protect\gamma_{\text{th}}/\text{ln}(2)$.} \label{Fig_TradOut}
\end{figure}

Finally, Fig. \ref{Fig_OutGN3} deals with the case of three
participating relays ($N=3$), and shows the outage behavior of the
time-sharing and the threshold-checking scheme. We observe that
the time-sharing scheme
outperforms the threshold-checking scheme for high values of $\bar{\gamma}%
/\gamma _{\text{th}}$, yet the threshold-checking scheme performs slightly
better for low values of $\bar{\gamma}/\gamma _{\text{th}}$. Moreover, we
notice that even the slightest increase of the required energy transfer to $%
\mathcal{H}$ (as this is reflected by setting $\delta =0.01$) severely
deteriorates the asymptotic outage performance, corroborating thus Corollary %
\ref{AS1}. Nevertheless, the shift of the outage curves towards the negative
unit slope occurs for relatively high SNRs for $\delta \rightarrow 0$,
implying that the outage curves maintain their diversity characteristics in
the medium SNR region when $\delta $ approaches zero.

\begin{figure}[ptb]
\centering
\includegraphics[trim=.4cm .5cm .5cm .4cm, clip=true,
scale=1.3]{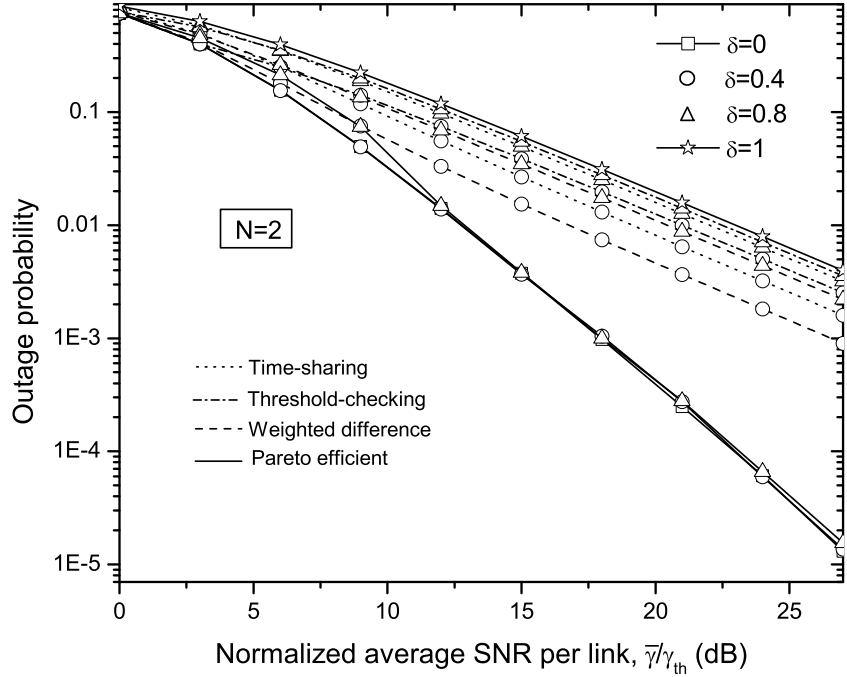}
\caption{Outage probability vs. normalized average SNR per link, $\bar{%
\protect\gamma}/\protect\gamma_{\text{th}}$, for $N=2$ and several
values of the tradeoff factor, $\protect\delta$.} \label{Fig_OutG}
\end{figure}

\section{Conclusions}

In scenarios involving relay-assisted information and energy transfer to a
designated receiver and a designated RF energy harvester, respectively, the
policy regarding the activated relay determines the tradeoff between quality
of information transfer and wireless energy transfer. We provided a thorough
analysis of this tradeoff for i.i.d. Rayleigh fading channels. For the
versatile scenario of $N$ candidate relays, \textquotedblleft time-sharing
selection\textquotedblright\ and \textquotedblleft threshold-checking
selection\textquotedblright\ schemes were developed and analyzed. Numerical
results showed that \textquotedblleft threshold-checking
selection\textquotedblright\ is better in terms of achieved capacity for a
given required energy transfer. However, in terms of outage probability for
a given energy transfer, \textquotedblleft time-sharing
selection\textquotedblright\ outperforms \textquotedblleft
threshold-checking selection\textquotedblright\ when the normalized average
SNR per link (with respect to the outage threshold SNR) is greater than $5$
dB; for low SNRs, the outcome of the comparison is reversed.

\begin{figure}[ptb]
\centering
\includegraphics[trim=.4cm .5cm .5cm .4cm, clip=true,
scale=1.3]{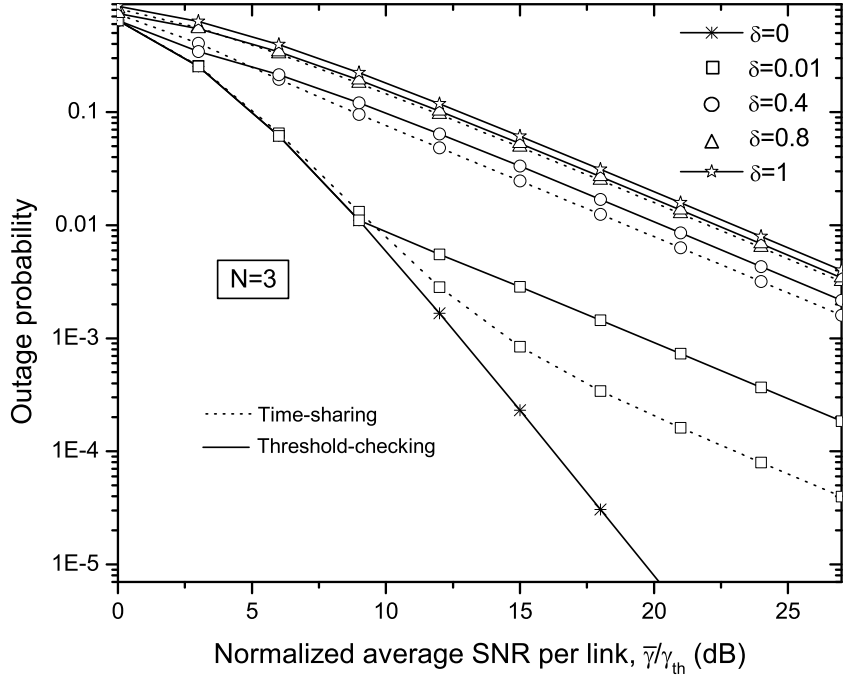}
\caption{Outage probability vs. normalized average SNR per link, $\bar{%
\protect\gamma}/\protect\gamma_{\text{th}}$, for $N=3$ and several
values of the tradeoff factor, $\protect\delta$.}
\label{Fig_OutGN3}
\end{figure}

For the special case of two candidate relays ($N=2$), we developed the
Pareto efficient relay selection policy. This policy yields the optimum
capacity and outage probability for a given energy transfer, as well as the
maximum energy transfer for a given constraint on the capacity or outage
probability. Along with the optimal policy, the selection scheme dubbed
\textquotedblleft weighted difference\textquotedblright\ was also proposed
for $N=2$. This scheme performs similarly to the Pareto efficient scheme,
and yields tractable mathematical analysis. A general conclusion drawn from
our analysis is that the diversity gain is lost when the links transferring
energy to the RF harvester are included in the relay selection decision ($%
\delta >0$), unless the Pareto efficient policy is employed and the links
transferring information to the receiver are also included in the selection
decision ($\delta <1$). Moreover, the Pareto efficient scheme and the
\textquotedblleft weighted difference\textquotedblright\ scheme offer
attractive tradeoffs when operating close to their upper and lower boundary (%
$\delta \approx 0$ and $\delta \approx 1$), in the sense that they achieve
substantial improvement of capacity (and/or outage probability) with
relatively little cost in energy transfer.

\appendices

%\section*{Appendix}

\section{\label{Cmax}Proof of Lemma \protect\ref{lem1}}

Clearly, the maximum ergodic capacity of the information transmission to $%
\mathcal{D}$ equals the ergodic capacity of the $\mathcal{S}$-$\mathcal{R}%
_{\kappa }$-$\mathcal{D}$ link. For DF relaying, the information rate to $%
\mathcal{D}$ is dominated by the bottleneck link, i.e., by the weakest of
the $\mathcal{S}$-$\mathcal{R}_{\kappa }$ and $\mathcal{R}_{\kappa }$-$%
\mathcal{D}$ links. Thus, the maximum ergodic capacity is obtained as%
\begin{equation}
C_{\max }=\frac{1}{2}\int_{0}^{\infty }\log _{2}\left( 1+\gamma _{\kappa
}\right) f_{\gamma _{\kappa }}\left( x\right) dx  \label{Cerg}
\end{equation}%
where the pre-log factor $1/2$ is used because of the half-duplex assumption
and $f_{\gamma _{\kappa }}\left( \cdot \right) $ denotes the probability
density function (PDF) of $\gamma _{\kappa }=\max_{i=1,...,N}\gamma _{i}$.
Since $\left\{ \gamma _{1},\gamma _{2},...,\gamma _{N}\right\} $ is a set of
exponentially distributed RVs, $f_{\gamma _{\kappa }}\left( \cdot \right) $
can be obtained from the theory of ordered statistics \cite{B:David} and (%
\ref{gi}) as%
\begin{equation}
f_{\gamma _{\kappa }}\left( x\right) =N\sum_{j=0}^{N-1}\left( -1\right) ^{j}%
\binom{N-1}{j}\frac{2}{\bar{\gamma}}\exp \left( -2x\frac{j+1}{\bar{\gamma}}%
\right) .  \label{gkap}
\end{equation}%
Plugging (\ref{gkap}) into (\ref{Cerg}) and using integration by parts
yields (\ref{CN}).

The minimum ergodic capacity occurs in the case where the CSI of the $%
\mathcal{S}$-$\mathcal{R}_{i}$-$\mathcal{D}$ links is not exploited for
relay selection, or equivalently, when the relay is selected based on a
process which is independent of the $\mathcal{S}$-$\mathcal{R}_{i}$-$%
\mathcal{D}$ channel strength\footnote{%
In fact, the ergodic capacity can reach even lower values, if a relay that
is known to have weak channel conditions is selected. This case, however, is
not in line with the concept of opportunistic relay selection, and is
therefore not considered here.}. The minimum ergodic capacity is obtained
directly from (\ref{CN}), by setting $N=1$. In that case, (\ref{CN}) reduces
to (\ref{C1}).

\section{\label{ApCTB}Derivation of (\protect\ref{CTB})}

The first term in (\ref{CTB}) is obtained by using (\ref{gkap}) and
employing integration by parts, yielding%
\begin{align}
& E\left\{ C_{\kappa }\left\vert \gamma _{\kappa }\geq \tau \right. \right\}
\Pr \left\{ \gamma _{\kappa }\geq \tau \right\} =\frac{\int_{\tau }^{\infty }%
\frac{1}{2}\log \left( 1+\gamma _{\kappa }\right) f_{\gamma _{\kappa
}}\left( \gamma _{\kappa }\right) d\gamma _{\kappa }}{\Pr \left\{ \gamma
_{\kappa }\geq \tau \right\} }\Pr \left\{ \gamma _{\kappa }\geq \tau \right\}
\notag \\
& =\sum_{j=0}^{M-1}\frac{N\left( -1\right) ^{j}\binom{N-1}{j}e^{-\frac{%
2\left( j+1\right) \tau }{\bar{\gamma}}}\left[ e^{\frac{2\left( j+1\right)
\left( \tau +1\right) }{\bar{\gamma}}}\mathcal{E}_{1}\left( \frac{2\left(
j+1\right) \left( \tau +1\right) }{\bar{\gamma}}\right) +\ln \left( \tau
+1\right) \right] }{2\left( j+1\right) \ln \left( 2\right) }.  \label{Ap1}
\end{align}%
The second term in (\ref{CTB}) is obtained as%
\begin{equation}
E\left\{ C_{\lambda }\left\vert \gamma _{\kappa }<\tau \right. \right\} \Pr
\left\{ \gamma _{\kappa }<\tau \right\} =\sum_{i=1}^{N}\Pr \left\{
s=i\right\} \frac{\int_{0}^{\tau }\frac{1}{2}\log _{2}\left( 1+x\right)
f_{\gamma _{i}}\left( x\right) dx}{\int_{0}^{\tau }f_{\gamma _{i}}\left(
y\right) dy}\left( 1-e^{-\frac{2\tau }{\bar{\gamma}}}\right) ^{N}.
\label{Ap2}
\end{equation}%
Hence, using (\ref{gi}), (\ref{Ap2}) reduces to%
\begin{equation}
E\left\{ C_{\lambda }\left\vert \gamma _{\kappa }<\tau \right. \right\} \Pr
\left\{ \gamma _{\kappa }<\tau \right\} =\frac{e^{\frac{2}{\bar{\gamma}}}%
\left[ \mathcal{E}_{1}\left( \frac{2}{\bar{\gamma}}\right) -\mathcal{E}%
_{1}\left( \frac{2\left( 1+\tau \right) }{\bar{\gamma}}\right) \right] -e^{-%
\frac{2\tau }{\bar{\gamma}}}\ln \left( 1+\tau \right) }{2\ln \left( 2\right)
}\left( 1-e^{-\frac{2\tau }{\bar{\gamma}}}\right) ^{N-1}.  \label{Ap3}
\end{equation}%
Adding (\ref{Ap1}) and (\ref{Ap3}) yields (\ref{CTB}).

\section{\label{PrProp}Proof of Proposition \protect\ref{Prop1}}

For analyzing the tradeoff of the weighted difference scheme, we distinguish
the following four cases. The selected relay in each case follows directly
from (\ref{WD}).

\begin{itemize}
\item \emph{Case 1: }$\gamma_{1}<\gamma_{2}$ and $\varepsilon_{1}<%
\varepsilon_{2}$. \emph{Selected relay:} $s=2$.

\item \emph{Case 2: }$\gamma_{1}<\gamma_{2}$ and $\varepsilon_{1}>%
\varepsilon_{2}$. \emph{Selected relay:} $s=2$ if $\varepsilon_{1}<%
\varepsilon_{2}+\left( \gamma_{2}-\gamma_{1}\right) /\nu$; $s=1$ if $%
\varepsilon_{1}>\varepsilon_{2}+\left( \gamma_{2}-\gamma_{1}\right) /\nu$.

\item \emph{Case 3: }$\gamma_{1}>\gamma_{2}$ and $\varepsilon_{1}<%
\varepsilon_{2}$. \emph{Selected relay:} $s=1$ if $\varepsilon_{2}<%
\varepsilon_{1}+\left( \gamma_{1}-\gamma_{2}\right) /\nu$; $s=2$ if $%
\varepsilon_{2}>\varepsilon_{1}+\left( \gamma_{1}-\gamma_{2}\right) /\nu$.

\item \emph{Case 4: }$\gamma_{1}>\gamma_{2}$ and $\varepsilon_{1}>%
\varepsilon_{2}$. \emph{Selected relay: }$s=1$.
\end{itemize}

Considering the above cases, we can express the average energy transfer
function of $\nu $ as%
\begin{align}
\epsilon _{WD}& =\underset{\mathcal{I}_{1}\text{ (\emph{Case 1})}}{%
\underbrace{\int_{0}^{\infty }f_{\gamma _{1}}\left( \gamma _{1}\right)
\int_{\gamma _{1}}^{\infty }f_{\gamma _{2}}\left( \gamma _{2}\right)
\int_{0}^{\infty }f_{\varepsilon _{2}}\left( \varepsilon _{2}\right)
\int_{0}^{\varepsilon _{2}}\varepsilon _{2}f_{\varepsilon _{1}}\left(
\varepsilon _{1}\right) d\varepsilon _{1}d\varepsilon _{2}d\gamma
_{2}d\gamma _{1}}}  \notag \\
& +\underset{\mathcal{I}_{2a}\text{ (\emph{Case 2a})}}{\underbrace{%
\int_{0}^{\infty }f_{\gamma _{1}}\left( \gamma _{1}\right) \int_{\gamma
_{1}}^{\infty }f_{\gamma _{2}}\left( \gamma _{2}\right) \int_{0}^{\infty
}f_{\varepsilon _{2}}\left( \varepsilon _{2}\right) \int_{\varepsilon
_{2}}^{\varepsilon _{2}+\left( \gamma _{2}-\gamma _{1}\right) /\nu
}\varepsilon _{2}f_{\varepsilon _{1}}\left( \varepsilon _{1}\right)
d\varepsilon _{1}d\varepsilon _{2}d\gamma _{2}d\gamma _{1}}}  \notag \\
& +\underset{\mathcal{I}_{2b}\text{ (\emph{Case 2b})}}{\underbrace{%
\int_{0}^{\infty }f_{\gamma _{1}}\left( \gamma _{1}\right) \int_{\gamma
_{1}}^{\infty }f_{\gamma _{2}}\left( \gamma _{2}\right) \int_{0}^{\infty
}f_{\varepsilon _{2}}\left( \varepsilon _{2}\right) \int_{\varepsilon
_{2}+\left( \gamma _{2}-\gamma _{1}\right) /\nu }^{\infty }\varepsilon
_{1}f_{\varepsilon _{1}}\left( \varepsilon _{1}\right) d\varepsilon
_{1}d\varepsilon _{2}d\gamma _{2}d\gamma _{1}}}  \notag \\
& +\underset{\mathcal{I}_{3a}\text{ (\emph{Case 3a})}}{\underbrace{%
\int_{0}^{\infty }f_{\gamma _{1}}\left( \gamma _{1}\right) \int_{0}^{\gamma
_{1}}f_{\gamma _{2}}\left( \gamma _{2}\right) \int_{0}^{\infty
}f_{\varepsilon _{1}}\left( \varepsilon _{1}\right) \int_{\varepsilon
_{1}}^{\varepsilon _{1}+\left( \gamma _{1}-\gamma _{2}\right) /\nu
}\varepsilon _{1}f_{\varepsilon _{2}}\left( \varepsilon _{2}\right)
d\varepsilon _{2}d\varepsilon _{1}d\gamma _{2}d\gamma _{1}}}  \notag \\
& +\underset{\mathcal{I}_{3b}\text{ (\emph{Case 3b})}}{\underbrace{%
\int_{0}^{\infty }f_{\gamma _{1}}\left( \gamma _{1}\right) \int_{0}^{\gamma
_{1}}f_{\gamma _{2}}\left( \gamma _{2}\right) \int_{0}^{\infty
}f_{\varepsilon _{1}}\left( \varepsilon _{1}\right) \int_{\varepsilon
_{1}+\left( \gamma _{1}-\gamma _{2}\right) /\nu }^{\infty }\varepsilon
_{2}f_{\varepsilon _{2}}\left( \varepsilon _{2}\right) d\varepsilon
_{2}d\varepsilon _{1}d\gamma _{2}d\gamma _{1}}}  \notag \\
& +\underset{\mathcal{I}_{4}\text{ (\emph{Case 4})}}{\underbrace{%
\int_{0}^{\infty }f_{\gamma _{1}}\left( \gamma _{1}\right) \int_{0}^{\gamma
_{1}}f_{\gamma _{2}}\left( \gamma _{2}\right) \int_{0}^{\infty
}f_{\varepsilon _{2}}\left( \varepsilon _{2}\right) \int_{\varepsilon
_{2}}^{\infty }\varepsilon _{1}f_{\varepsilon _{1}}\left( \varepsilon
_{1}\right) d\varepsilon _{1}d\varepsilon _{2}d\gamma _{2}d\gamma _{1}}}.
\label{eWD}
\end{align}%
Using elementary integrations and \cite[Eq. (3.351.7)]{B:Gra_Ryz_Book}, (\ref%
{eWD}) reduces after algebraic manipulations to%
\begin{align}
\epsilon _{WD}& =\underset{\mathcal{I}_{1}}{\underbrace{\frac{3}{8}\bar{%
\varepsilon}}}+\underset{\mathcal{I}_{2a}}{\underbrace{\frac{\bar{\varepsilon%
}~\bar{\gamma}}{8\bar{\gamma}+16\nu \bar{\varepsilon}}}}+\underset{\mathcal{I%
}_{2b}}{\underbrace{\frac{\bar{\varepsilon}^{2}\nu \left( 5\bar{\gamma}+6\nu
\bar{\varepsilon}\right) }{4\left( \bar{\gamma}+2\nu \bar{\varepsilon}%
\right) ^{2}}}}+\underset{\mathcal{I}_{3a}}{\underbrace{\frac{\bar{%
\varepsilon}~\bar{\gamma}}{8\bar{\gamma}+16\nu \bar{\varepsilon}}}}+\underset%
{\mathcal{I}_{3b}}{\underbrace{\frac{\bar{\varepsilon}^{2}\nu \left( 5\bar{%
\gamma}+6\nu \bar{\varepsilon}\right) }{4\left( \bar{\gamma}+2\nu \bar{%
\varepsilon}\right) ^{2}}}}+\underset{\mathcal{I}_{4}}{\underbrace{\frac{3}{8%
}\bar{\varepsilon}}}  \notag \\
& =\frac{\bar{\varepsilon}}{2}\left[ 3-\frac{\bar{\gamma}^{2}}{\left( \bar{%
\gamma}+2\nu \bar{\varepsilon}\right) ^{2}}\right] .  \label{eWD2}
\end{align}%
It is interesting to observe from (\ref{eWD2}) the following: $\mathcal{I}%
_{1}=\mathcal{I}_{4}$; $\mathcal{I}_{2a}=\mathcal{I}_{3a}$; $\mathcal{I}%
_{2b}=\mathcal{I}_{3b}$; $\mathcal{I}_{1}$ and $\mathcal{I}_{4}$ are
independent of $\bar{\gamma}$. All the above observations are explained by
the assumption that all participating channels are i.i.d. Solving (\ref{eWD2}%
) with respect to $\nu $ yields%
\begin{equation}
\nu =\frac{\bar{\gamma}}{2\bar{\varepsilon}}\left( \sqrt{\frac{\bar{%
\varepsilon}}{3\bar{\varepsilon}-2\epsilon _{WD}}}-1\right) ,~\bar{%
\varepsilon}\leq \epsilon _{WD}\leq H_{2}\bar{\varepsilon}.  \label{nu}
\end{equation}

The ergodic capacity of the weighted difference scheme is calculated in a
way similar to the average transferred energy. Considering again \emph{Case
1 - Case 4}, yields%
\begin{align}
C_{WD}& =\underset{\mathcal{J}_{1}\text{ (\emph{Case 1})}}{\underbrace{%
\int_{0}^{\infty }f_{\gamma _{1}}\left( \gamma _{1}\right) \int_{\gamma
_{1}}^{\infty }f_{\gamma _{2}}\left( \gamma _{2}\right) \int_{0}^{\infty
}f_{\varepsilon _{2}}\left( \varepsilon _{2}\right) \int_{0}^{\varepsilon
_{2}}\frac{\log _{2}\left( 1+\gamma _{2}\right) }{2}f_{\varepsilon
_{1}}\left( \varepsilon _{1}\right) d\varepsilon _{1}d\varepsilon
_{2}d\gamma _{2}d\gamma _{1}}}  \notag \\
& +\underset{\mathcal{J}_{2a}\text{ (\emph{Case 2a})}}{\underbrace{%
\int_{0}^{\infty }f_{\gamma _{1}}\left( \gamma _{1}\right) \int_{\gamma
_{1}}^{\infty }f_{\gamma _{2}}\left( \gamma _{2}\right) \int_{0}^{\infty
}f_{\varepsilon _{2}}\left( \varepsilon _{2}\right) \int_{\varepsilon
_{2}}^{\varepsilon _{2}+\left( \gamma _{2}-\gamma _{1}\right) /\nu }\frac{%
\log _{2}\left( 1+\gamma _{2}\right) }{2}f_{\varepsilon _{1}}\left(
\varepsilon _{1}\right) d\varepsilon _{1}d\varepsilon _{2}d\gamma
_{2}d\gamma _{1}}}  \notag \\
& +\underset{\mathcal{J}_{2b}\text{ (\emph{Case 2b})}}{\underbrace{%
\int_{0}^{\infty }f_{\gamma _{1}}\left( \gamma _{1}\right) \int_{\gamma
_{1}}^{\infty }f_{\gamma _{2}}\left( \gamma _{2}\right) \int_{0}^{\infty
}f_{\varepsilon _{2}}\left( \varepsilon _{2}\right) \int_{\varepsilon
_{2}+\left( \gamma _{2}-\gamma _{1}\right) /\nu }^{\infty }\frac{\log
_{2}\left( 1+\gamma _{1}\right) }{2}f_{\varepsilon _{1}}\left( \varepsilon
_{1}\right) d\varepsilon _{1}d\varepsilon _{2}d\gamma _{2}d\gamma _{1}}}
\notag \\
& +\underset{\mathcal{J}_{3a}\text{ (\emph{Case 3a})}}{\underbrace{%
\int_{0}^{\infty }f_{\gamma _{1}}\left( \gamma _{1}\right) \int_{0}^{\gamma
_{1}}f_{\gamma _{2}}\left( \gamma _{2}\right) \int_{0}^{\infty
}f_{\varepsilon _{1}}\left( \varepsilon _{1}\right) \int_{\varepsilon
_{1}}^{\varepsilon _{1}+\left( \gamma _{1}-\gamma _{2}\right) /\nu }\frac{%
\log _{2}\left( 1+\gamma _{1}\right) }{2}f_{\varepsilon _{2}}\left(
\varepsilon _{2}\right) d\varepsilon _{2}d\varepsilon _{1}d\gamma
_{2}d\gamma _{1}}}  \notag \\
& +\underset{\mathcal{J}_{3b}\text{ (\emph{Case 3b})}}{\underbrace{%
\int_{0}^{\infty }f_{\gamma _{1}}\left( \gamma _{1}\right) \int_{0}^{\gamma
_{1}}f_{\gamma _{2}}\left( \gamma _{2}\right) \int_{0}^{\infty
}f_{\varepsilon _{1}}\left( \varepsilon _{1}\right) \int_{\varepsilon
_{1}+\left( \gamma _{1}-\gamma _{2}\right) /\nu }^{\infty }\frac{\log
_{2}\left( 1+\gamma _{2}\right) }{2}f_{\varepsilon _{2}}\left( \varepsilon
_{2}\right) d\varepsilon _{2}d\varepsilon _{1}d\gamma _{2}d\gamma _{1}}}
\notag \\
& \underset{\mathcal{J}_{4}\text{ (\emph{Case 4})}}{+\underbrace{%
\int_{0}^{\infty }f_{\gamma _{1}}\left( \gamma _{1}\right) \int_{0}^{\gamma
_{1}}f_{\gamma _{2}}\left( \gamma _{2}\right) \int_{0}^{\infty
}f_{\varepsilon _{2}}\left( \varepsilon _{2}\right) \int_{\varepsilon
_{2}}^{\infty }\frac{\log _{2}\left( 1+\gamma _{1}\right) }{2}f_{\varepsilon
_{1}}\left( \varepsilon _{1}\right) d\varepsilon _{1}d\varepsilon
_{2}d\gamma _{2}d\gamma _{1}}}.  \label{CWD}
\end{align}%
Using integration by parts where appropriate, in conjunction with \cite[Eq.
(3.351.7)]{B:Gra_Ryz_Book} and \cite[Eq. (4.337.2)]{B:Gra_Ryz_Book}, (\ref%
{CWD}) reduces after algebraic manipulations to%
\begin{align}
C_{WD}& =\underset{\mathcal{J}_{1}~\left( \mathcal{J}_{4}\right) }{2~%
\underbrace{\frac{2e^{\frac{2}{\bar{\gamma}}}\mathcal{E}_{1}\left( \frac{2}{%
\bar{\gamma}}\right) -e^{\frac{4}{\bar{\gamma}}}\mathcal{E}_{1}\left( \frac{4%
}{\bar{\gamma}}\right) }{8\ln \left( 2\right) }}}  \notag \\
& +2~\underset{\mathcal{J}_{2}~\left( \mathcal{J}_{3}\right) }{\underbrace{%
\frac{2\left( \bar{\gamma}^{2}-4\nu ^{2}\bar{\varepsilon}^{2}\right)
\mathcal{E}_{1}\left( \frac{2}{\bar{\gamma}}\right) +8e^{\frac{1}{\nu \bar{%
\varepsilon}}}\nu ^{2}\bar{\varepsilon}^{2}\mathcal{E}_{1}\left( \frac{2}{%
\bar{\gamma}}+\frac{1}{\nu \bar{\varepsilon}}\right) -e^{\frac{2}{\bar{\gamma%
}}}\left( \bar{\gamma}^{2}+4\nu ^{2}\bar{\varepsilon}^{2}\right) \mathcal{E}%
_{1}\left( \frac{4}{\bar{\gamma}}\right) }{8e^{-\frac{2}{\bar{\gamma}}%
}\left( \bar{\gamma}^{2}-4\nu ^{2}\bar{\varepsilon}^{2}\right) \ln \left(
2\right) }}}  \label{CWD2a} \\
& =\frac{2\left( \bar{\gamma}^{2}-4\nu ^{2}\bar{\varepsilon}^{2}\right)
\mathcal{E}_{1}\left( \frac{2}{\bar{\gamma}}\right) +4e^{\frac{1}{\nu \bar{%
\varepsilon}}}\nu ^{2}\bar{\varepsilon}^{2}\mathcal{E}_{1}\left( \frac{2}{%
\bar{\gamma}}+\frac{1}{\nu \bar{\varepsilon}}\right) -e^{\frac{2}{\bar{\gamma%
}}}\bar{\gamma}^{2}\mathcal{E}_{1}\left( \frac{4}{\bar{\gamma}}\right) }{%
2e^{-\frac{2}{\bar{\gamma}}}\left( \bar{\gamma}^{2}-4\nu ^{2}\bar{\varepsilon%
}^{2}\right) \ln \left( 2\right) }  \label{CWD2b}
\end{align}%
where the factor $2$ in front of each of the two terms in (\ref{CWD2a}) is
due to symmetry, similar to the observations in (\ref{eWD2}). Plugging (\ref%
{nu}) into (\ref{CWD2b}) yields (\ref{WDTrad}).

\section{\label{AppOut}Proof of Proposition \protect\ref{Theo2}}

The outage probability of the weighted difference scheme is obtained by
utilizing the four cases considered in Appendix \ref{PrProp}, as follows%
\begin{align}
P_{out,WD} & =\underset{\mathcal{K}_{1}\text{ (\emph{Case 1})}}{\underbrace{%
\int_{0}^{\gamma_{\text{th}}}f_{\gamma_{2}}\left( \gamma_{2}\right)
\int_{0}^{\gamma_{2}}f_{\gamma_{1}}\left( \gamma_{1}\right)
\int_{0}^{\infty}f_{\varepsilon_{2}}\left( \varepsilon_{2}\right) \int
_{0}^{\varepsilon_{2}}f_{\varepsilon_{1}}\left( \varepsilon_{1}\right)
d\varepsilon_{1}d\varepsilon_{2}d\gamma_{1}d\gamma_{2}}}  \notag \\
& +\underset{\mathcal{K}_{2a}\text{ (\emph{Case 2a})}}{\underbrace{\int
_{0}^{\gamma_{\text{th}}}f_{\gamma_{2}}\left( \gamma_{2}\right)
\int_{0}^{\gamma_{2}}f_{\gamma_{1}}\left( \gamma_{1}\right) \int_{0}^{\infty
}f_{\varepsilon_{2}}\left( \varepsilon_{2}\right)
\int_{\varepsilon_{2}}^{\varepsilon_{2}+\left( \gamma_{2}-\gamma_{1}\right)
/\nu}f_{\varepsilon _{1}}\left( \varepsilon_{1}\right)
d\varepsilon_{1}d\varepsilon_{2}d\gamma_{1}d\gamma_{2}}}  \notag \\
& +\underset{\mathcal{K}_{2b}\text{ (\emph{Case 2b})}}{\underbrace{\int
_{0}^{\gamma_{\text{th}}}f_{\gamma_{1}}\left( \gamma_{1}\right)
\int_{\gamma_{1}}^{\infty}f_{\gamma_{2}}\left( \gamma_{2}\right)
\int_{0}^{\infty }f_{\varepsilon_{2}}\left( \varepsilon_{2}\right)
\int_{\varepsilon _{2}+\left( \gamma_{2}-\gamma_{1}\right)
/\nu}^{\infty}f_{\varepsilon_{1}}\left( \varepsilon_{1}\right)
d\varepsilon_{1}d\varepsilon_{2}d\gamma _{2}d\gamma_{1}}}  \notag \\
& +\underset{\mathcal{K}_{3a}\text{ (\emph{Case 3a})}}{\underbrace{\int
_{0}^{\gamma_{\text{th}}}f_{\gamma_{1}}\left( \gamma_{1}\right)
\int_{0}^{\gamma_{1}}f_{\gamma_{2}}\left( \gamma_{2}\right) \int_{0}^{\infty
}f_{\varepsilon_{1}}\left( \varepsilon_{1}\right)
\int_{\varepsilon_{1}}^{\varepsilon_{1}+\left( \gamma_{1}-\gamma_{2}\right)
/\nu}f_{\varepsilon _{2}}\left( \varepsilon_{2}\right)
d\varepsilon_{2}d\varepsilon_{1}d\gamma_{2}d\gamma_{1}}}  \notag \\
& +\underset{\mathcal{K}_{3b}\text{ (\emph{Case 3b})}}{\underbrace{\int
_{0}^{\gamma_{\text{th}}}f_{\gamma_{2}}\left( \gamma_{2}\right)
\int_{\gamma_{2}}^{\infty}f_{\gamma_{1}}\left( \gamma_{1}\right)
\int_{0}^{\infty }f_{\varepsilon_{1}}\left( \varepsilon_{1}\right)
\int_{\varepsilon _{1}+\left( \gamma_{1}-\gamma_{2}\right)
/\nu}^{\infty}f_{\varepsilon_{2}}\left( \varepsilon_{2}\right)
d\varepsilon_{2}d\varepsilon_{1}d\gamma _{1}d\gamma_{2}}}  \notag \\
& \underset{\mathcal{K}_{4}\text{ (\emph{Case 4})}}{+\underbrace{\int
_{0}^{\gamma_{\text{th}}}f_{\gamma_{1}}\left( \gamma_{1}\right)
\int_{0}^{\gamma_{1}}f_{\gamma_{2}}\left( \gamma_{2}\right) \int_{0}^{\infty
}f_{\varepsilon_{2}}\left( \varepsilon_{2}\right)
\int_{\varepsilon_{2}}^{\infty}f_{\varepsilon_{1}}\left(
\varepsilon_{1}\right) d\varepsilon
_{1}d\varepsilon_{2}d\gamma_{2}d\gamma_{1}}}.  \label{OutWD1}
\end{align}
Working similarly as in Appendix \ref{PrProp}, we can simplify (\ref{OutWD1}%
) to%
\begin{align}
P_{out,WD} & =\underset{\mathcal{K}_{1}\left( \mathcal{K}_{4}\right) }{~2~%
\underbrace{\frac{1}{4}e^{-\frac{4\gamma_{\text{th}}}{\bar{\gamma}}}\left(
e^{\frac{2\gamma_{\text{th}}}{\bar{\gamma}}}-1\right) ^{2}}}+~2~\underset{%
\mathcal{K}_{2}\left( \mathcal{K}_{3}\right) }{\underbrace{\left\{ \frac{1}{4%
}1+e^{-\frac{4\gamma_{\text{th}}}{\bar{\gamma}}}\left[ \frac{\bar{\gamma }%
^{2}+4\nu^{2}\bar{\varepsilon}^{2}}{\bar{\gamma}^{2}-4\nu^{2}\bar {%
\varepsilon}^{2}}-2e^{\frac{2\gamma_{\text{th}}}{\bar{\gamma}}}\left( 1+%
\frac{4e^{-\frac{\gamma_{\text{th}}}{\nu\bar{\varepsilon}}}\nu^{2}\bar{%
\varepsilon }^{2}}{\bar{\gamma}^{2}-4\nu^{2}\bar{\varepsilon}^{2}}\right) %
\right] \right\} }}  \notag \\
& =\frac{e^{-\frac{4\gamma_{\text{th}}}{\bar{\gamma}}}\left( e^{\frac{%
2\gamma_{\text{th}}}{\bar{\gamma}}}-1\right) ^{2}\bar{\gamma}^{2}+4\nu^{2}%
\bar{\varepsilon}^{2}\left[ e^{-\frac{2\gamma_{\text{th}}}{\bar{\gamma}}%
}\left( 2-e^{-\frac {\gamma_{\text{th}}}{\nu\bar{\varepsilon}}}\right) -1%
\right] }{\bar{\gamma}^{2}-4\nu^{2}\bar{\varepsilon}^{2}}.  \label{OutWD2}
\end{align}
Substituting (\ref{nu}) in (\ref{OutWD2}) yields the outage probability of
the weighted difference scheme expressed as a function of $\epsilon_{WD}$
and $\bar{\varepsilon}$, as shown in (\ref{OutWD0}). Using (\ref{delta}), (%
\ref{OutWD02}) is derived from (\ref{OutWD0}).

\section{\label{ProofTheo}Proof of Theorem \protect\ref{Theo}}

By its definition, $\mathcal{F}$ represents the average of a performance
metric over a window of $M$ transmission sessions, when $M\rightarrow \infty
$. In order to mathematically express the selection of either $\mathcal{R}%
_{1}$ or $\mathcal{R}_{2}$ in a given transmission session, $m$, we
introduce the binary auxiliary variable $w_{m}$, such that $w_{m}=1$ if $s=1$%
; $w_{m}=0$ if $s=2$. Then, the problem of maximizing $\mathcal{F}$ for
given energy transfer constraints is expressed as%
\begin{align}
& \max_{w_{m}}\lim_{M\rightarrow \infty }\frac{1}{M}\sum_{m=1}^{M}\left[
w_{m}\mathcal{F}\left( \gamma _{1,m}\right) +\left( 1-w_{m}\right) \mathcal{F%
}\left( \gamma _{2,m}\right) \right]  \notag \\
& \text{s.t. ~~\ ~\ \ ~~}\frac{1}{M}\sum_{m=1}^{M}w_{m}\left( 1-w_{m}\right)
=0  \notag \\
~~~~~& ~~~~~~~~~~~~\lim_{M\rightarrow \infty }\ \frac{1}{M}\sum_{m=1}^{M}%
\left[ w_{m}\varepsilon _{1,m}+\left( 1-w_{m}\right) \varepsilon _{2,m}%
\right] \geq \epsilon  \label{Gen1}
\end{align}%
where $\left\{ \gamma _{1,m},\varepsilon _{1,m}\right\} $, $\left\{ \gamma
_{2,m},\varepsilon _{2,m}\right\} $ denote the \{SNR, harvested energy\} of
the $\mathcal{S}$-$\mathcal{R}_{1}$-$\mathcal{D}$ and $\mathcal{S}$-$%
\mathcal{R}_{2}$-$\mathcal{D}$ links, respectively, in transmission frame $m$%
. Using the parameters $\xi _{m}$ and $\zeta $ as non-negative Langrange
multipliers, the Langrangian of the above problem is obtained as%
\begin{align}
\mathcal{L}& =\lim_{M\rightarrow \infty }\frac{1}{M}\sum_{m=1}^{M}\left[
w_{m}\mathcal{F}\left( \gamma _{1,m}\right) +\left( 1-w_{m}\right) \mathcal{F%
}\left( \gamma _{2,m}\right) \right]  \notag \\
& +\frac{\xi _{m}}{M}\sum_{m=1}^{M}w_{m}\left( 1-w_{m}\right)
+\lim_{M\rightarrow \infty }\frac{\zeta }{M}\sum_{m=1}^{M}\left[
w_{m}\varepsilon _{1,m}+\left( 1-w_{m}\right) \varepsilon _{2,m}-\epsilon %
\right] .  \label{L1}
\end{align}%
Let us not concentrate on maximizing $\mathcal{L}$ for a given transmission
frame $m$, and let us drop the index $m$ for notational simplicity. The
derivative of $\mathcal{L}$ with respect to $w$ is obtained from (\ref{L1})
as%
\begin{equation}
\frac{\partial \mathcal{L}}{\partial w}=\frac{1}{M}\mathcal{F}\left( \gamma
_{1}\right) -\frac{1}{M}\mathcal{F}\left( \gamma _{2}\right) +\frac{\xi }{M}%
\left( 1-2w\right) +\frac{\zeta }{M}\left( \varepsilon _{1}-\varepsilon
_{2}\right) .  \label{L2}
\end{equation}%
Setting the derivative in (\ref{L2}) equal to zero and solving with respect
to $w$ yields%
\begin{equation}
w=\frac{\Delta \mathcal{F}+\zeta ~\Delta \varepsilon +\xi }{2\xi }
\label{wm1}
\end{equation}%
where $\Delta $ denotes difference, such that $\Delta \mathcal{F}=\mathcal{F}%
\left( \gamma _{1}\right) -\mathcal{F}\left( \gamma _{2}\right) $ and $%
\Delta \varepsilon =\varepsilon _{1}-\varepsilon _{2}$. Since $w$ is a
binary variable (i.e., it equals either zero or one), (\ref{wm1}) yields%
\begin{equation}
w=\left\{
\begin{array}{c}
0,~~\text{if }\xi =-\Delta \mathcal{F}-\zeta ~\Delta \varepsilon \\
1,~~~\text{if }\xi =\Delta \mathcal{F}+\zeta ~\Delta \varepsilon%
\end{array}%
.\right.  \label{wm2}
\end{equation}%
Considering that $\xi \geq 0$, (\ref{wm2}) yields the optimal relay
selection rule given the value of $\zeta \geq 0$, as follows%
\begin{equation}
w=\left\{
\begin{array}{c}
0,~~\text{if }\Delta \mathcal{F}+\zeta ~\Delta \varepsilon <0 \\
1,~~\text{if }\Delta \mathcal{F}+\zeta ~\Delta \varepsilon \geq 0%
\end{array}%
\right. .  \label{wm3}
\end{equation}%
which is equivalent to (\ref{PO}). Since $\mathcal{F}\left( \gamma
_{i}\right) $ is a non-decreasing function of $\gamma _{i}$, the policy in (%
\ref{PO}) also maximizes the average energy transfer for a given $\mathcal{F}
$. This completes the proof.

\bibliographystyle{IEEEtran}
\bibliography{acompat,References}

\end{document}